\DeclareRobustCommand{\SkipTocEntry}[5]{}
\setlist{itemsep=.5\baselineskip,topsep=.5\baselineskip}
\numberwithin{equation}{section}
\theoremstyle{plain}
\newtheorem{theorem}{Theorem}[section]
\newtheorem{thm}[theorem]{Theorem}
\newtheorem{lemma}[theorem]{Lemma}
\newtheorem{prop}[theorem]{Proposition}
\newtheorem{definition}[theorem]{Definition}
\newtheorem{defn}[theorem]{Definition}
\newtheorem{cor}[theorem]{Corollary}
\newtheorem{rmk}[theorem]{Remark}
\newtheorem{fact}[theorem]{Fact}
\newcommand{\arr}{\rightarrow}
\newcommand{\incl}{\hookrightarrow}
\newcommand{\C}{\mathbb{C}}
\newcommand{\Z}{\mathbb{Z}}
\newcommand{\eps}{\epsilon}
\newcommand{\Id}{\mathbbm{1}}
\newcommand{\mcB}{\mathcal{B}}
\newcommand{\mcF}{\mathcal{F}}
\newcommand{\mcG}{\mathcal{G}}
\newcommand{\mcH}{\mathcal{H}}
\newcommand{\mcI}{\mathcal{I}}
\newcommand{\mcO}{\mathcal{O}}
\newcommand{\mcS}{\mathcal{S}}
\newcommand{\mcU}{\mathcal{U}}
\newcommand{\mbN}{\mathbb{N}}
\definecolor{fillpaleyellow}{RGB}{255,250,205} 
\definecolor{fillbrownyellow}{RGB}{209,189,148} 
\definecolor{fillyellow}{RGB}{217,191,140}
\definecolor{fillgray}{RGB}{199,204,209}
\newtheorem*{theorem*}{Theorem}
\newcommand{\norm}[1]{\left\lVert#1\right\rVert}
\newcommand{\tJ}{J}
\newcommand{\tM}{K}
\newcommand{\tR}{r}
\DeclareMathOperator{\Dehn}{Dehn}
\DeclareMathOperator{\Ar}{Area}
\newcommand{\tG}{\widetilde{G}_w}
\renewcommand\Re{\operatorname{Re}}
\renewcommand\Im{\operatorname{Im}}
\newcommand{\MIP}{\mathrm{MIP}}
\DeclareMathOperator{\PZKMIP}{\mathrm{PZK-MIP}}
\newcommand{\NTIME}{\mathrm{NTIME}}
\newcommand{\TIME}{\mathrm{TIME}}
\newcommand{\coNTIME}{\mathrm{coNTIME}}
\newcommand{\NEEXP}{\mathrm{NEEXP}}
\DeclareMathOperator{\poly}{poly}
\DeclareMathOperator{\NEXP}{NEXP}
\DeclareMathOperator{\NP}{NP}
\title[Lower bounds for computing non-local games to high precision]{Complexity lower bounds for computing the approximately-commuting operator value of non-local games to high precision} 
\author{Matthew Coudron}
\thanks{Institute for Quantum Computing, University of Waterloo, Waterloo,
    Canada. email: \texttt{mcoudron@uwaterloo.ca}}
\author{William Slofstra}
\thanks{Institute for Quantum Computing and Department of
    Pure Mathematics, University of Waterloo, Waterloo, Canada. email:
    \texttt{weslofst@uwaterloo.ca}}
\begin{document}

\begin{abstract}
We study the problem of approximating the commuting-operator value of a
two-player non-local game. It is well-known that it is $\NP$-complete to
decide whether the classical value of a non-local game is 1 or $1-
\epsilon$, promised that one of the two is the case.  Furthermore,
as long as $\epsilon$ is small enough, this result does not depend on 
the gap $\epsilon$.  In contrast, a recent result of Fitzsimons, Ji,
Vidick, and Yuen shows that the complexity of computing the quantum value grows
without bound as the gap $\epsilon$ decreases. In this paper, we show that this
also holds for the commuting-operator value of a game. Specifically, in the
language of multi-prover interactive proofs, we show that the power of
$\MIP^{co}(2,1,1,s)$ (proofs with two provers, one round, completeness
probability $1$, soundness probability $s$, and commuting-operator strategies)
can increase without bound as the gap $1-s$ gets arbitrarily small.

Our results also extend naturally in two ways, to perfect zero-knowledge
protocols, and to lower bounds on the complexity of computing the
approximately-commuting value of a game. Thus we get lower bounds on the
complexity class $\PZKMIP^{co}_{\delta}(2,1,1,s)$ of perfect zero-knowledge
multi-prover proofs with approximately-commuting operator strategies, as the
gap $1-s$ gets arbitrarily small. While we do not know any computable time
upper bound on the class $\MIP^{co}$, a result of the first author and Vidick
shows that for $s = 1-1/\poly(f(n))$ and $\delta = 1/\poly(f(n))$, the class
$\MIP^{co}_\delta(2,1,1,s)$, with constant communication from the provers, is
contained in $\TIME(\exp(\poly(f(n))))$.  We give a lower bound of
$\coNTIME(f(n))$ (ignoring constants inside the function) for this class, which
is tight up to polynomial factors assuming the exponential time hypothesis.
\end{abstract}

\maketitle

 \section{Introduction}
 
 Non-local games are a subject of converging interest for quantum information
 theory and computational complexity theory. A central question in both fields
 is the complexity of approximating the optimal winning probability of a
 non-local game.  Quantum mechanics allows non-local strategies in which the
 players share entanglement, and in quantum complexity theory we are interested
 in understanding the optimal winning probability over these entangled
 strategies.  Answering this question is necessary for understanding the power
 of multi-prover interactive proof systems with entangled provers and a
 classical verifier.
 
 For classical strategies (i.e. strategies without entanglement), it is NP-hard
 to decide whether a non-local game has winning probability $1$.  The PCP
 theorem implies that it is $\NP$-hard to decide whether a non-local game has
 winning probability $1$ or winning probability $1-\eps$, where $\eps$ is
 constant, promised that one of the two is the case \cite{AroLunMotSudSze98JACM,
 	AroSaf98JACM}. Therefore, for classical games, the complexity of computing the
 winning probability is the same for constant error as for zero error. 
 
 
 Two models for quantum strategies have historically been used when defining
 the entangled value of a nonlocal game: the tensor product model and the
 commuting-operator model. The optimal winning probability of a non-local game
 over tensor product strategies is called the quantum value, and optimal winning
 probability over all commuting-operator strategies is called the
 commuting-operator value. 
 
 A number of lower bounds on approximating the quantum value of a non-local
 game are known.  In particular, Ji has shown that it is $\NEXP$-hard to compute
 the quantum value of a non-local game with inverse polynomial precision, and
 $\NEEXP$-hard to compute the entangled value with inverse exponential precision
 \cite{Ji16}. Fitzsimons, Ji, Vidick, and Yuen continue this line of results
 by showing, roughly, that for any computable
 function $f(n): \mathbb{N} \to \mathbb{N}$, it is $\NTIME(\exp(f(n)))$ hard to compute the
 quantum value of a nonlocal game with $1/f(n)$ precision (here $n$ is the input size) \cite{FJVY18}. In particular, this implies that the quantum
 value of a game behaves very differently from the classical winning probability,
 since the complexity of computing the quantum value increases without bound
 as the required precision increases. 
 
 It is also natural to ask whether one might be able to approximate the
 commuting-operator value of a game efficiently.  The study of the
 commuting-operator value goes back to \cite{IKPSY08}, where it is shown that it
 is NP-hard to distinguish whether the commuting operator value is 1 or
 1-1/poly(n). The complexity of the commuting operator value does not seem to be
 explicitly studied in more recent work.
 
 In this paper, we look at lower bounds on the complexity of approximating the
 commuting-operator value of linear system nonlocal games, a type of nonlocal
 game closely connected with the theory of finitely-presented groups
 \cite{CLS16}. We show that group-theoretic methods can be used to lower bound
 the complexity of approximating the commuting-operator value of a linear system
 nonlocal game.  In particular we show that, just as with the quantum value of a
 game, the complexity of computing the commuting operator value of a non-local
 game to precision $\epsilon$ grows arbitrarily large as $\epsilon$ decreases.
 Because our results are based on group-theoretic methods, we observe that they
 naturally extend to lower bounds on approximately-commuting-operator strategies
 for games, a generalization of commuting-operator strategies in which Alice and
 Bob's strategies can interact slightly, but in such a way that the interaction
 is bounded by a parameter $\delta$. Thus we show:
 
 \begin{thm}\label{T:main_game}
 	There is a universal constant $k$ such that for every language $L \subset
 	A^*$ over a finite alphabet $A$ and contained in $\coNTIME(f(n))$, where
 	$f(n)$ is at least polynomial, there is a constant $C > 0$ and a
 	family of two-player non-local games $(\mcG_w)_{w \in A^*}$ of size
 	$\poly(n)$ and computable in $\poly(n)$-time, such that for any
 	$\delta = o(1/f(Cn)^{k})$, deciding
 	whether $\omega^{co}(\mcG_w) = 1$, or  
 	\begin{equation*}
 	\omega^{co}_{\delta}(\mcG_w) \leq 1 - 1/f(Cn)^k + O(\delta), 
 	\end{equation*}
 	promised that one of the two is the case, is as hard as deciding membership
 	of $w$ in $L$. 
 \end{thm}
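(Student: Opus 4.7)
The plan is to combine the correspondence between linear system non-local games and finitely-presented groups from \cite{CLS16} with a quantitative, computation-aware encoding of non-deterministic Turing machines into group presentations, and then to use a Dehn-function-based soundness argument to convert polynomial bounds on the area of a distinguished group element into polynomial gaps for the associated game.

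Concretely, I would start from the hypothesis $L \in \coNTIME(f(n))$, so that the complement $L^c$ is decided by a non-deterministic Turing machine $M$ running in time $O(f(n))$. The first step is to construct, uniformly in polynomial time, a finitely-presented group $\tG$ depending on the input $w$, together with a distinguished word in its generators (which, following the notation $\solgroup$, I will also call $w$), such that (i) $w = 1$ in $\tG$ iff $M$ has an accepting computation on input $w$, and (ii) whenever $w = 1$, the Dehn area of $w$ in $\tG$ is bounded by $\poly(f(|w|))$. Property (i) is a standard ``groups simulate computation'' construction in the spirit of Boone--Novikov, but the point here is to make it \emph{quantitative}: an accepting computation of length $\leq f(|w|)$ should unpack, relator by relator, into a van Kampen diagram of area $\leq \poly(f(|w|))$.

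The next step is to feed $(\tG, w)$ into the solution-group machinery of \cite{CLS16}, producing a linear system $(A, b)$ of size polynomial in $|w|$ whose solution group $\solgroup$ carries a central involution $\tJ$ with the property that $\tJ \neq 1$ in $\solgroup$ if and only if $w \neq 1$ in $\tG$. The CLS correspondence then yields a linear-system non-local game $\mcG_w$ whose perfect commuting-operator strategies correspond to non-trivial representations of $\solgroup$ sending $\tJ$ to $-1$, so $\omega^{co}(\mcG_w) = 1$ exactly when $w \in L$. A key feature we need is that this construction preserves polynomial control of area: when $\tJ = 1$ in $\solgroup$, one has $\Ar(\tJ) \leq \poly(\Ar(w))$, so the area of $\tJ$ remains bounded by $\poly(f(|w|))$.

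The final step is the quantitative soundness bound: whenever $\tJ = 1$ in $\solgroup$, I would show
\begin{equation*}
\omega^{co}_{\delta}(\mcG_w) \leq 1 - \Omega\bigl(1/\Ar(\tJ)\bigr) + O(\delta).
\end{equation*}
The idea is to lift a $(1-\eta)$-winning $\delta$-approximately-commuting strategy to an $O(\eta+\delta)$-approximate representation of $\solgroup$ on some Hilbert space; writing $\tJ$ as a product of $N = \Ar(\tJ)$ relators and applying the triangle inequality then forces the representation defect to be $\Omega(1/N)$, which rearranges to the displayed bound. Combining the three steps gives the stated gap $1/f(Cn)^k - O(\delta)$ and the desired reduction from $L$. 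The principal obstacle, and the place where I expect the real technical work, is step (i): simultaneously keeping the group presentation polynomial in $|w|$ \emph{and} the Dehn area polynomial in $f(|w|)$ is a delicate balancing act, and it will require a quantitative version of the classical Boone--Novikov/Higman constructions that is uniform in the input and tailored to pass cleanly through the solution-group functor.
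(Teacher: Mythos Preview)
Your outline is essentially the paper's proof. The quantitative Boone--Novikov ingredient you identify as the main obstacle is supplied off-the-shelf by the Sapir--Birget--Rips theorem \cite{SBR02}, which the paper uses as a black box (giving a \emph{fixed} group $G$ for the language, with Dehn function equivalent to $T(n)^4$); an additional HNN-extension trick (Definition~\ref{def::HNNtrick} and Lemma~\ref{lem::HNNtrick}) is then needed to convert ``$\kappa(w)=1$ in $G$?'' into ``$J=1$ in $\tG$?'' before applying the solution-group embedding, which is from \cite{Sl16} rather than \cite{CLS16}.

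One quantitative correction: your soundness estimate $\omega^{co}_\delta \leq 1-\Omega(1/\Ar(J))+O(\delta)$ is too optimistic. A near-optimal strategy only yields a \emph{state-dependent} approximate representation, i.e.\ $\Phi(r)\ket{\psi}\approx_{\eps'}\ket{\psi}$ with $\eps'=O(\sqrt{K\eta})$ (Cauchy--Schwarz over the $K=O(|w|)$ constraints), not an operator-norm bound $\|\Phi(r)-\Id\|\leq\eps'$. Consequently the conjugators $z_i$ in $J=\prod z_i r_i^{\pm1} z_i^{-1}$ cannot be absorbed by unitary invariance; one must shuttle them between the two halves of a bipartite representation at cost proportional to $|z_i|=O(\Ar(J))$, and the paper's Lemma~\ref{lem::bipartiterep} gives $\Phi(J)\ket{\psi}\approx_{O(\Ar(J)^2\eps')}\ket{\psi}$, leading to a gap of order $1/\poly(\Ar(J),|w|)$. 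This only affects the value of the universal constant $k$, so your argument goes through.
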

Here $\omega^{co}(\mcG_w)$ is the commuting-operator value of $\mcG_w$, which
is the supremum over winning probabilities of all commuting-operator strategies
for $\mcG_w$. Simiarly, $\omega^{co}_{\delta}(\mcG_w)$ is the
$\delta$-commuting-operator value of $\mcG_w$, i.e. the supremum over winning
probabilities of all $\delta$-commuting-operator strategies (see Definition
\ref{defn::AC}). For any $\delta \geq 0$, we have 
\begin{equation*}
    \omega^{co}(\mcG_w) = \omega^{co}_{0}(\mcG_w) \leq \omega^{co}_{\delta}(\mcG_w)
        \leq 1,
\end{equation*}
so in particular, if $\omega^{co}(\mcG_w) = 1$, then $\omega^{co}_{\delta}(\mcG_w)=1$.
Thus Theorem \ref{T:main_game} gives a hardness result for approximating
$\omega^{co}_{\delta}(\mcG_w)$. If $\delta=0$, then we get a
hardness result for approximating the commuting-operator value of 
two-player non-local games. 
The proof of Theorem \ref{T:main_game} is given in Section \ref{S:mainproof}.
 
 The proof of Theorem \ref{T:main_game} relies on a deep group theory result of
 Sapir, Birget, and Rips, which shows that the acceptance problem for any Turing
 machine can be encoded in the word problem of a finitely-presented group, in
 such a way that the Dehn function of the group is equivalent to the running
 time of the Turing machine \cite{SBR02}. We then use \cite{Sl16} to embed this
 group into linear system non-local games. In the case that a word $w \in A^*$
 does not belong to $L$, the provers demonstrate this fact by showing that a
 certain word in the corresponding group is not equal to the identity. In this
 case, the representation of the group forms the proof that the word is not
 equal to the identity, and this representation is used to build the provers'
 quantum strategy. The reason that we use commuting-operator strategies in
 Theorem \ref{T:main_game}, and again in Theorem \ref{T:main_MIP} below, is that
 this representation might not be finite-dimensional. 
 
 Little is known about upper bounds on the complexity of computing the value of
 non-local games.  Most existing proposals for an algorithm are based on a
 hierarchy of semi-definite programs \cite{NPA07, NPA08NJP, DLTW08}.  It remains
 open whether such an algorithm can approximate the commuting-operator value of
 a game to any precision $\epsilon$ in finite time.  However, the first author
 and Vidick have shown that the SDP hierarchy of \cite{NPA07, NPA08NJP, DLTW08}
 can be used to estimate (with explicit convergence bounds) the optimal value of
 a non-local game over approximately-commuting strategies \cite{CV15}. 
 In particular \cite{CV15} gives an algorithm which, given a description of a
 non-local game as a truth-table of size $n$, can decide whether the game has
 commuting-operator value equal to 1, or has no $\delta$-commuting-operator
 strategy with winning probability higher than $1-\eps$, in time
$\poly\left(\log(1/\eps), n^{\poly(\ell, 1/\delta)}\right)$, where $\ell$ is the size of
the output set for the game. The games $\mcG_w$ in Theorem \ref{T:main_game}
have constant size output sets. Taking $\delta = o(1/f(Cn)^k)$ and $\eps =
1/f(Cn)^k - O(\delta) = O(1/f(Cn)^k)$, Theorem \ref{T:main_game} shows
that deciding whether $\omega^{co}(\mcG)=1$ or $\omega^{co}_{\delta}(\mcG)
\leq 1-\eps$ is $\coNTIME(f(n))$-hard. 
According to the exponential time hypothesis, we might expect that the best
deterministic upper bound for $\coNTIME(f(n))$ is $\TIME(2^{\poly(f(n))})$.
Thus, if we assume the exponential time hypothesis, the non-deterministic lower
bound in Theorem \ref{T:main_game} matches the deterministic upper bound in
\cite{CV15} up to polynomial factors (for families of games with a constant
number of outputs).
 
Results about the complexity of non-local games have direct and natural
implications for the power of multi-prover interactive proofs. Multi-prover
interactive proofs were originally defined and studied in a purely classical
setting.  A seminal result of Babai, Fortnow, and Lund, which studies the class
$\MIP$ of languages which admit a multi-prover interactive proof with
polynomial time verifier, states that $\MIP = \NEXP$. Once again, this equality
is independent of the completeness-soundness gap, as long as this gap is a
large enough constant. For entangled strategies, there are, a priori, two
analogs of the class $\MIP$ to consider, the class $\MIP^*$ of multi-prover
interactive proofs in which provers may use finite-dimensional entangled
strategies, and $\MIP^{co}$, the equivalent class with commuting-operator
strategies. A result of Ito and Vidick states that the class
$\MIP^*(4,1,1,1-1/\poly(n))$ with four provers, one round, completeness probability
1, and soundness probability 1-1/poly(n) contains $\NEXP$ \cite{IV12}.
Ji's result mentioned earlier for computing the quantum value of game shows
that with a sufficient number of provers $k$, $\MIP^*(k,1,1,1-1/\exp(n))$
contains NEEXP, in contrast again to the classical case \cite{Ji16}. Ji's
result is based on a compression theorem for non-local games, which also shows
that the problem of computing the quantum value of a game is complete for
$\MIP^*$.
 
Theorem \ref{T:main_game} can be translated into  lower bounds on
$\MIP^{co}_{\delta}$, the class of languages with a multiprover interactive
proof sound against approximately commuting strategies. Furthermore, these
lower bounds also apply to the class $\PZKMIP^{co}_{\delta}$ of languages which
admit a perfect zero knowledge multiprover interactive proof sound against
approximately commuting strategies. In a perfect zero knowledge interactive
proof the provers must reveal nothing to the verifier except the proven
statement itself. The formal definition of these two classes is given in
Definitions \ref{defn::mip} and \ref{defn::pzkmip}.
\begin{thm}\label{T:main_MIP}
 	There is a universal constant $k$ such that for any language $L$ in
 	$\NTIME(f(n))$, where $f(n)$ is at least polynomial, there is a constant
 	$C$ such that for any $\delta = o(1/f(Cn)^{k})$, 
 	\begin{equation*}
 	\overline{L} \in \PZKMIP^{co}_{\delta}(2,1,1,1-1/f(Cn)^k),
 	\end{equation*}
 	where $\overline{L}$ is the complement of $L$. 
\end{thm}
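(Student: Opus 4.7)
The plan is to apply Theorem \ref{T:main_game} to the complement of $L$ and then package the resulting hard family of non-local games into an interactive proof. Given $L \in \NTIME(f(n))$ we have $\overline{L} \in \coNTIME(f(n))$, so Theorem \ref{T:main_game} supplies a polynomial-time computable family $(\mcG_w)_{w \in A^*}$ of size $\poly(n)$ with $\omega^{co}(\mcG_w)=1$ when $w \in \overline{L}$ and $\omega^{co}_\delta(\mcG_w) \leq 1 - 1/f(Cn)^k + O(\delta)$ when $w \notin \overline{L}$, under the hypotheses on $\delta$.

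The first step is to spell out the natural one-round, two-prover protocol: on input $w$, the verifier computes $\mcG_w$, samples a question pair $(x,y)$ from its question distribution, sends $x$ to $P_1$ and $y$ to $P_2$, and accepts iff the game's verification predicate accepts the returned answers. Completeness with probability exactly $1$, and soundness against any $\delta$-commuting-operator strategy with error bounded by $1/f(Cn)^k$, then follow immediately from the two bullet points of Theorem \ref{T:main_game}, so the protocol already witnesses $\overline{L} \in \MIP^{co}_\delta(2,1,1,1-1/f(Cn)^k)$.

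The nontrivial part is upgrading this to \emph{perfect zero knowledge}, which requires a polynomial-time simulator that, given $w \in \overline{L}$, produces a distribution over transcripts identical to the honest verifier's view. For this I would exploit the specific structure of the games built in Theorem \ref{T:main_game}: they are the linear-system games associated to $\solgroup$ via \cite{Sl16}, for which a perfect commuting-operator strategy comes from a representation of the solution group. By tensoring this representation with a sufficiently ``random'' auxiliary representation---or equivalently by prepending a subprotocol in which the provers agree on additional shared randomness that masks their measurement outcomes---one can arrange the honest strategy's output marginals to be uniform on the valid-answer set for each question pair, while preserving winning probability $1$, because the game predicate only tests a linear relation between the two answers. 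The simulator then outputs $(x,y)$ drawn from the verifier's distribution together with an answer pair $(a,b)$ sampled uniformly at random conditioned on satisfying the verification predicate for $(x,y)$. The main obstacle I anticipate is performing this randomization in a way compatible with \emph{commuting-operator} (possibly infinite-dimensional) strategies rather than the finite-dimensional setting familiar from most MIP constructions, and verifying that the induced transcript distribution matches the honest verifier's view \emph{exactly} rather than only statistically, as required by the perfect (rather than statistical) zero-knowledge notion.

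If the direct flattening of the marginals is obstructed by the representation-theoretic structure of the Sapir--Birget--Rips group, I would fall back on a two-prover zero-knowledge transformation in the style of Ben-Or--Goldwasser--Kilian--Wigderson, adapted to the commuting-operator setting. Applied to a one-round protocol with constant-size answer alphabet, such a transformation introduces only a polynomial overhead in the game size (hence preserves the $\poly(n)$ complexity claim), and its soundness analysis relies only on the non-signalling properties already guaranteed by $\delta$-commuting-operator strategies, so the soundness bound $1-1/f(Cn)^k$ from Theorem \ref{T:main_game} carries through up to constant factors that may be absorbed into $k$ and $C$.
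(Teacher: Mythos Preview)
Your reduction to Theorem \ref{T:main_game} and the description of the one-round, two-prover protocol are exactly right; this part matches the paper's proof verbatim.

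The divergence is in how perfect zero-knowledge is obtained. The paper does \emph{not} modify the strategy at all, nor does it invoke a BGKW-style transformation. Instead it proves directly (Proposition \ref{P:pzkwagonwheel}) that the canonical perfect commuting-operator strategy coming from the left/right regular representation of $\Gamma_{M(w)x=c(w)}$ already has the correlation
\[
\bra{\psi} P^i_a Q^j_b \ket{\psi} = \begin{cases} \frac{1+(-1)^{a_j+b}}{8} & j \in V_i \\ \frac{1}{8} & j \notin V_i, \end{cases}
\]
which is precisely the uniform distribution over winning answer pairs. The crucial ingredient is Lemma \ref{L:pzkwagonwheel}, a purely group-theoretic fact specific to the wagon-wheel construction: in $\Gamma_{Mx=c}$ no generator $x_i$ equals $1$ or $J$, and no product $x_i x_j$ with $i\neq j$ equals $1$ or $J$. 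In the regular representation this forces all the ``cross terms'' $\bra{\psi} A_{ik} B_j \ket{\psi}$ with $k\neq j$ to vanish, yielding the formula above. So the simulator you describe (sample uniformly conditioned on winning) is exactly the right target, but the paper reaches it by computation rather than by randomizing the strategy.

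Your tensoring/shared-randomness idea is the right intuition, and in fact it is essentially equivalent to what Lemma \ref{L:pzkwagonwheel} establishes: for the randomization to flatten the marginals you would need the classical solutions of $Mx=0$ to act transitively on each output set $\mcO_A^i$, and checking this for the wagon-wheel system is precisely the content of Lemma \ref{L:pzkwagonwheel} (reformulated via one-dimensional representations). So your route is not wrong, but it would still require proving that lemma; the paper's presentation simply front-loads that lemma and avoids the need to alter the strategy. Your BGKW fallback would work in principle but is considerably heavier than necessary, and its soundness analysis against $\delta$-AC strategies is not in the literature---the direct argument is both shorter and cleaner.
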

Note that, since the containment $\PZKMIP^{co}_{\delta} \subseteq
\MIP^{co}_{\delta}$ is immediate (see Definition \ref{defn::pzkmip}), Theorem
\ref{T:main_MIP} represents both a lower bound for $\PZKMIP^{co}_{\delta}$ and
for $\MIP^{co}_{\delta}$ itself.    Similarly to Theorem \ref{T:main_game},
when $\delta=0$, we get a lower bound on the class $\MIP^{co} := \MIP^{co}_0$
of multi-prover interactive proofs with commuting-operator strategies, which is
the direct analog of the complexity class $\MIP^*$ in the commuting operator
setting.
 
One reason we are interested in the class $\MIP^{co}_{\delta}$ is that the
algorithm of \cite{CV15} mentioned above gives a (deterministic) time upper
bound for $\MIP^{co}_{\delta}$, described in Theorem \ref{T:upperbound}.  In
contrast, no computable upper bounds for $\MIP^*$ or $\MIP^{co}$ are known.
Assuming $f(n)$ is at least exponential, combining Theorem \ref{T:main_MIP}
and the upper bound of \cite{CV15} gives the containments
\begin{align} \label{eq:maineq}
    \coNTIME(f(n)) &\subseteq \PZKMIP^{co}_{\delta}(2,1,1,1-1/f(Cn)^k) \\
    &\subseteq \MIP^{co}_{\delta}(2,1,1,1-1/f(Cn)^k) \nonumber \\
    & \subseteq \TIME(\exp(1 /\poly(\delta))) \nonumber,
\end{align}
where $C$ and $k$ are constants, and $\delta = o(1/f(Cn)^k)$.  Just as
for the decision problem in Theorem \ref{T:main_game}, if we assume the
exponential time hypothesis then we can consider the left hand side and right
hand side of Equation \ref{eq:maineq} above to be matching up to polynomial
factors. If we restrict $\PZKMIP^{co}$ and $\MIP^{co}$ to protocols with
constant-sized output sets, then this chain of inequalities holds as long as
$f(n)$ is at least polynomial.
 
Our results are complementary to the results of Fitzsimons, Ji, Vidick, and
Yuen, who show qualitatively similar lower bounds for computing the quantum
value of $k$-player games and for $\MIP^*(k,1,1,s)$, where $k \geq 15$.  Their
results show that $\MIP^*$ with $1/f(n)$ completeness-soundness gap contains
$\NTIME(2^{f(n)})$, matching the pattern seen in \cite{Ji16} for inverse
polynomial and inverse exponential gaps. In contrast, in our result the scaling
of the lower bound relative to the gap is weaker, requiring gap of order
$1/f(n)$ to get a lower bound of $\coNTIME(f(n))$, and applying to
commuting-operator strategies rather than quantum strategies. However, our
results apply to two-player protocols, while the results of \cite{FJVY18} apply
to protocols with 15 or more players.  That we get a lower bound of
$\coNTIME(f(n))$ rather than $\NTIME(2^{f(n)})$ can be explained by the fact
that our lower bound extends to $\MIP^{co}_{\delta}$, which, with the
restriction to protocols with constant-sized outputs, is contained in
$\TIME(2^{f(n)})$. Thus our results highlight the importance of considering
soundness to approximately-commuting strategies when seeking lower bounds on
$\MIP^*$ and $\MIP^{co}$. It seems to be an interesting open problem to
determine whether the improved bounds of \cite{FJVY18} can be done with
algebraic methods. 

\subsection{Acknowledgements}

We thank Zhengfeng Ji, Alex Bredariol Grilo, Anand Natarajan, Thomas Vidick, and Henry Yuen for helpful discussions.

MC was supported at the IQC by Canada's NSERC and the Canadian Institute for
Advanced Research (CIFAR), and through funding provided to IQC by the
Government of Canada and the Province of Ontario.  WS was supported by NSERC DG
2018-03968.
 
 \section{Group theory preliminaries}
 
 Recall that a \emph{finitely-presented group} is a group $G$ with a fixed
 presentation $G = \langle S : R \rangle$, meaning that $G$ is the quotient of
 the free group $\mcF(S)$ generated by a finite set $S$, by the normal subgroup
 generated by a finite set of relations $R \subseteq \mcF(S)$. If $G = \langle S
 : R \rangle$, and $R' \subseteq \mcF(S \cup S')$, then the notation $\langle G,
 S' : R' \rangle$ refers to the presentation $\langle S\cup S' : R \cup R'
 \rangle$.  A \emph{(group) word of length $k$} over the generators $S$ is a
 string $s_1^{a_1} \cdots s_k^{a_k}$ where $s_i \in S$ and $a_i \in \{\pm 1\}$
 for all $1 \leq i \leq k$. Such a word is said to be \emph{reduced} if $s_i =
 s_{i+1}$ implies that $a_i = a_{i+1}$ for all $1 \leq i \leq k-1$.  Every
 element $w \in \mcF(S)$ is represented by a unique reduced word, and the
 \emph{length $|w|$ of $w$} is defined to be the length of this reduced word. 
 The \emph{word problem} for $G$ is the problem of deciding whether the image of
 a given element $w \in \mcF(S)$ is equal to the identity in $G$, or in other
 words, whether the word is in the normal subgroup of $\mcF(S)$ generated by
 $R$. Since the reduced form of any non-reduced word over $S$ can be found in
 time linear in the length of that non-reduced word, we can ask that inputs to the word problem be represented either
 as reduced or non-reduced words without changing the problem. 

A \emph{(unitary) representation} of a group $G$ is a homomorphism $\phi : G
\arr \mcU(\mcH)$, where $\mcU(\mcH)$ is the unitary group of a Hilbert space
$\mcH$. If $G = \langle S : R \rangle$ is a finitely-presented group, then a
representation $\phi : G \arr \mcU(\mcH)$ can be specified by giving a
homomorphism $\widetilde{\phi} : \mcF(S) \arr \mcU(\mcH)$ such that
$\widetilde{\phi}(r) = 1$ for every $r \in R$. If $G$ is a group, then $\ell^2
G$ is the Hilbert space with Hilbert basis $\mcB = \{\ket{g} : g \in G\}$. This means
that every element of $\mcH$ is of the form $\sum_{g \in G} c_g \ket{g}$, where
$\sum_{g \in G} |c_g|^2 \leq +\infty$. Since every group $G$ acts on itself by
both left and right multiplication, $G$ also acts by left and right
multiplication on $\mcB$. Thus $G$ acts unitarily on $\ell^2 G$ by left
and right multiplication. The resulting representations $L,R : G \arr
\mcU(\ell^2 G)$ are called the \emph{left} and \emph{right regular
representations} of $G$, respectively. 

 If $w \in \mcF(S)$ is a word which is equal to the identity in $G$, we let $\Ar_G(w)$ be
 the minimum $t \geq 1$ such that 
 \begin{equation*}
 w = z_1 r_1^{a_1} z_1^{-1} \cdots z_t r_t^{a_t} z_t^{-1}
 \end{equation*}
 for some $r_1,\ldots,r_t \in R$, $z_1,\ldots,z_t \in \mcF(S)$, and
 $a_1,\ldots,a_t \in \{\pm 1\}$.\footnote{$\Ar_G(w)$ can also be defined as
 	the minimum number of regions in a van Kampen diagram with boundary word $w$,
 	and this is where the name comes from.} 
 The \emph{Dehn function $\Dehn_G$} of $G$ is the function $\mbN \arr \mbN$
 defined by
 \begin{equation*}
 \Dehn_G(n) = \max\{\Ar_G(w) : w \in \mcF(S) \text{ has } |w| \leq n \text{ and } w=1 \text{ in G}\}.
 \end{equation*}
 If the word problem of $G$ is decidable, then $\Dehn_G$ is computable.
 Conversely, the word problem of $G$ belongs to $\NTIME(\Dehn_G(n))$
 \cite{SBR02}. 
 An easy way to see that the complexity of the word problem is bounded by the
 Dehn function (albeit with the slightly worse upper bound of
 $\NTIME(\poly(\Dehn_G(n)))$) is through the following lemma:
 \begin{lemma}[\cite{Ge93}, Lemma 2.2]\label{lem::wordlengthbound}
 	Let $G = \langle S : R \rangle$ be a finitely-presented group, and let
 	$\ell$ be the length of the longest relation in $R$. If $w \in \mcF(S)$ is
 	equal to the identity in $G$ and $k = \Ar_G(w)$, then 
 	\begin{equation*}
 	w = z_1 r_1^{a_1} z_1^{-1} \cdots z_k r_k^{a_k} z_k^{-1}
 	\end{equation*}
 	where $r_1,\ldots,r_k \in R$, $z_1,\ldots,z_k \in \mcF(S)$, $a_1,\ldots,a_k
 	\in \{\pm 1\}$, and $|z_i| \leq k\ell + \ell + |w|$ for all $1 \leq i \leq k$. 
 \end{lemma}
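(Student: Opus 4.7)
The plan is to prove the lemma via van Kampen diagrams. Recall that an expression $w = z_1 r_1^{a_1} z_1^{-1} \cdots z_k r_k^{a_k} z_k^{-1}$ over the presentation $\langle S : R \rangle$ corresponds geometrically to a van Kampen diagram $D$ with boundary word $w$ and exactly $k$ internal 2-cells, each labeled by some $r_i^{a_i}$; moreover $\Ar_G(w)$ equals the minimum number of 2-cells in such a diagram. So I would fix any van Kampen diagram $D$ for $w$ with exactly $k = \Ar_G(w)$ 2-cells.

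The first key step is to bound the diameter of the 1-skeleton $D^{(1)}$ from the basepoint. Each 2-cell of $D$ has boundary of length at most $\ell$, so summing over all 2-cells gives $\sum_F |\partial F| \leq k\ell$. Since every edge of $D^{(1)}$ lies either on the boundary of some 2-cell or on the outer boundary of $D$ (which contributes $|w|$ edges), the total number of edges in $D^{(1)}$ is at most $k\ell + |w|$. Choosing any spanning tree $T$ of $D^{(1)}$ rooted at the basepoint, the tree path from the basepoint to any vertex of $D$ has length at most $k\ell + |w|$.

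Next I would peel off 2-cells one at a time. For the $i$-th peeled 2-cell $F_i$, pick a distinguished vertex $v_i$ on $\partial F_i$ and let $y_i$ be the path in $T$ from the basepoint to $v_i$, so $|y_i| \leq k\ell + |w|$. The word $r_i^{a_i}$ is read starting from some vertex of $\partial F_i$, which may not be $v_i$; to reach that starting vertex we append a path along $\partial F_i$ of length at most $\ell$, producing $z_i = y_i \cdot u_i$ with $|z_i| \leq k\ell + \ell + |w|$, as required. The peeling is organized by choosing at each stage a 2-cell whose boundary touches the current outer boundary, so that the product telescopes to the desired form $w = z_1 r_1^{a_1} z_1^{-1} \cdots z_k r_k^{a_k} z_k^{-1}$ in $\mcF(S)$.

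The main obstacle is the bookkeeping in the peeling step: one must ensure that the intermediate diagrams retain their van Kampen structure and that the uniform edge bound $k\ell + |w|$ continues to apply, since peeling can increase the length of the current boundary word. This is handled by noting that the bound $k\ell + |w|$ was derived from the edge count of the \emph{original} diagram $D$, which is an upper bound that does not depend on the order in which cells are removed. Once this is observed, the induction on $k$ goes through routinely, and the resulting conjugators satisfy the stated length bound.
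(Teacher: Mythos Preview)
The paper does not actually prove this lemma; it is quoted verbatim from \cite{Ge93} and used as a black box. So there is no ``paper's own proof'' to compare against.

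Your van Kampen diagram approach is the standard one and is essentially what Gersten does. Two comments on the sketch. First, the edge-count step is stated a little loosely: the correct accounting is that every edge of $D^{(1)}$ is incident to at least one face (inner or outer), so the total number of edges is bounded by $\sum_F |\partial F| + |\partial D| \leq k\ell + |w|$; in fact each edge is counted at least twice in this sum, so you even get $E \leq (k\ell + |w|)/2$, but the weaker bound suffices. Second, the ``peeling'' description is really just a narrative wrapper around the spanning-tree argument: once you have a spanning tree $T$ rooted at the basepoint, the product $\prod_i z_i r_i^{a_i} z_i^{-1}$ (with $z_i$ the $T$-path to a vertex of $\partial F_i$, in the order the faces are encountered by a walk around $T$) already equals $w$ in $\mcF(S)$; no inductive peeling with changing boundaries is needed, which sidesteps the ``main obstacle'' you flag. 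With those clarifications the argument is complete and yields the stated bound $|z_i| \leq k\ell + \ell + |w|$.
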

 In general, the Dehn function can be much larger than the time-complexity of
 the word problem of $G$. However, Sapir, Birget, and Rips have shown that every
 recursive language can be reduced to the word problem of a finitely-presented
 group for which the Dehn function is polynomially equivalent to the
 time-complexity of the word problem. For the statement of the theorem, recall
 that two functions $T, T' : \mbN \arr \mbN$ are said to be
 \emph{(asymptotically) equivalent} if there are constants $C,C'$ such that
 $T(n) \leq C T'(Cn) + Cn + C$ and $T'(n) \leq C' T(C'n) + C'n + C'$ for all $n
 \geq 1$. 
 \begin{thm}[\cite{SBR02}, Theorem 1.3] \label{thm::SBR}
 	Let $A$ be a finite alphabet, and $L \subset A^*$ a language over $A$
 	contained in $\NTIME(T(n))$, where $T(n)$ is computable and $T(n)^4$ is at least superadditive (i.e.
 	$T(n+m)^4 \geq T(n)^4 + T(m)^4$. Then there exists a finitely-presented
 	group $G = \langle S : R \rangle$ and an injective function $\kappa : A^*
 	\arr \mcF(S)$, such that 
 	\begin{enumerate}[(a)]
 		\item $|\kappa(u)| = O(|u|)$ and $\kappa(u)$ is computable in time $O(|u|)$, 
 		\item $u \in L$ if and only if $\kappa(u) = 1$ in $G$, and 
 		\item $\Dehn_G(n)$ is bounded by a function equivalent to $T(n)^4$.  
 	\end{enumerate}
 \end{thm}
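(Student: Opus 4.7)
The plan is to encode accepting computations of a nondeterministic Turing machine for $L$ as van Kampen diagrams over a carefully chosen finite presentation, with an auxiliary combinatorial rewriting system as intermediary, so that the area of such a diagram is polynomially equivalent to the length of the computation it represents. First, given an $\NTIME(T(n))$ machine $M$ deciding $L$, I would simulate it by a symmetric $S$-machine $\mcS$: a nondeterministic rewriting system acting on words of the form (state letter)(tape word)(state letter), in which each command replaces a single state letter together with its two immediate neighbors. Standard constructions show that such an $\mcS$ can simulate $M$ with only polynomial slowdown in $T(n)$, and its commands have the crucial feature that they look algebraically like conjugations rather than arbitrary local replacements, which is what makes them amenable to group-theoretic encoding.

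Second, form the group $G = G(\mcS)$ by taking as generators the tape letters, the state letters, and one fresh ``command letter'' $\theta_c$ per command $c$ of $\mcS$, subject to two families of relations: command relations $\theta_c \cdot \text{LHS}(c) = \text{RHS}(c) \cdot \theta_c$, and a single ``hub'' relation identifying the accepting configuration with $1$. Define $\kappa(u)$ to be the group word representing the initial configuration of $\mcS$ on input $u$; then part (a) is immediate from the construction. One direction of (b), and the upper bound in (c), are then the easy half: a length-$t$ accepting $\mcS$-computation on $u$ assembles into a van Kampen diagram of area $O(t^2)$ by conjugating the initial configuration through the command sequence to the accepting configuration and collapsing the result via one application of the hub, which after the polynomial simulation slowdown yields $\Ar_G(\kappa(u)) = O(T(|u|)^4)$.

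The main obstacle is the converse direction, which simultaneously gives the remaining half of (b) and the lower bound on $\Dehn_G$ in (c): from an arbitrary van Kampen diagram $\Delta$ with boundary word $\kappa(u)$, extract an accepting $\mcS$-computation of length polynomial in $\Ar(\Delta)$. My plan is to decompose $\Delta$ into maximal ``trapezia''---strips of consecutive command cells sharing a common $\theta_c$-edge---together with ``hub subdiagrams,'' producing a higher-level disk diagram whose $2$-cells are these blocks. The delicate part is controlling the geometry of this decomposition: trapezia can nest, form annuli, or in principle collapse, and one must exploit the specific algebraic form of the relations to rule out pathological configurations and bound the number and size of the pieces in terms of $\Ar(\Delta)$. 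Once this is in place, each well-formed trapezium reads off as a legal segment of $\mcS$-computation, and concatenating the segments along the hub structure yields an accepting computation of length polynomial in $\Ar(\Delta)$; combined with the earlier upper bound, this delivers both the biconditional in (b) and the equivalence $\Dehn_G(n) \asymp T(n)^4$ in (c).
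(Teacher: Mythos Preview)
This theorem is not proved in the present paper: it is quoted as Theorem~1.3 of \cite{SBR02} and used as a black box throughout. There is therefore no ``paper's own proof'' to compare your proposal against.

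That said, your outline is a broadly faithful high-level sketch of the Sapir--Birget--Rips argument itself. The construction does pass through $S$-machines, the group presentation does consist of conjugation-type command relations together with hub relations, the map $\kappa$ does send an input to the word encoding the initial configuration, and the hard direction is indeed extracting an accepting computation from an arbitrary van Kampen diagram via a band/trapezium analysis. A few places where your sketch is oversimplified relative to \cite{SBR02}: the $S$-machine actually used is considerably more elaborate than a direct simulation of $M$ (it has several sectors and multiple interleaved copies of the tape, and this structure is what ultimately produces the specific exponent $4$); there is not a single hub relation but one that ties together all sectors simultaneously; and the ``pathological configurations'' you allude to in the last paragraph ($\theta$-annuli, spurious bands, interactions between hubs and discs) occupy the bulk of the original paper and require substantial auxiliary machinery that your plan only names. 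None of this is a genuine gap in your strategy so much as a warning that carrying the plan through is a hundred-page undertaking, which is precisely why the present paper cites the result rather than reproving it.
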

 
 A group over $\Z_2$ is a pair $(G,J)$ where $J$ is a central involution, i.e.
 an element of the center of $G$ with $J^2 = 1$. Usually we just write $G$ for
 the pair, and refer to $J = J_G$ in the same way we refer to the identity $1 =
 1_G$ of a group. When $J_G \neq 1_G$, it can be used as a substitute for $-1$.
 Theorem \ref{thm::SBR} implies that any recursive decision problem can be encoded in the
 word problem of a group. We want an embedding of this type where the word $w$
 is a central involution.  For this, we use the following trick:
 \begin{definition}\label{def::HNNtrick}
 	Let $G = \langle S : R\rangle$ be a finitely-presented group, and let
 	$x,\tJ,t$ be indeterminates not in $S$. Given $w \in \mcF(S)$, let
 	\begin{align*}
 	\tG := \langle G,x,\tJ,t\ :\quad   & \tJ^2 = 1, [g,\tJ] = 1 \text{ for all } g \in G, \\ 
 	& [x,\tJ]=1, [t,\tJ]=1, [t,[x,w]] = \tJ\rangle,
 	\end{align*}
 	where $[a,b] := a b a^{-1} b^{-1}$ is the group commutator. 
 \end{definition}
 Note that if $G$ is finitely-presented, then we only need to include the
 relations $[g,J]=1$ for $g$ in a generating set of $G$, and this gives
 a finite presentation of $\tG$.
 \begin{lemma}\label{lem::HNNtrick}
 	Given a group $G = \langle S : R \rangle$ and a word $w \in \mcF(S)$, let
 	$\tG$ be the group defined in Definition \ref{def::HNNtrick}. Then
 	\begin{enumerate}[(a)]
 		\item $\tJ$ is a central involution in $\tG$,
 		\item $w = 1$ in $G$ if and only if $\tJ=1$ in $\tG$, and
 		\item if $w=1$ in $G$ then $\Ar_{\tG}(\tJ) \leq 4 \Ar_G(w)+1$. 
 	\end{enumerate}
 \end{lemma}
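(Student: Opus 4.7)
Part (a) follows immediately from inspection of the defining relations: $\tJ^2=1$ is built in, and the commutator relations $[s,\tJ]=1$ for $s$ in a generating set of $G$, together with $[x,\tJ]=1$ and $[t,\tJ]=1$, force $\tJ$ to commute with every generator of $\tG$, hence to be central. For parts (b) and (c), the plan is to handle (c) and the easy direction of (b) via one explicit decomposition in the free group, and to handle the hard direction of (b) by realizing $\tG$ as an HNN extension of a smaller group in which $\tJ$ is visibly nontrivial.

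For (c), write $w = z_1 r_1^{a_1} z_1^{-1} \cdots z_k r_k^{a_k} z_k^{-1}$ in $\mcF(S)$ with $k=\Ar_G(w)$. Distributing conjugation by $x$ and taking the inverse yields, as a free-group identity,
\[
[x,w] \;=\; \prod_{i=1}^k (xz_i)\, r_i^{a_i}(xz_i)^{-1} \;\cdot\; \prod_{i=k}^{1} z_i\, r_i^{-a_i} z_i^{-1},
\]
a product of $2k$ conjugates of relators in $R$. Conjugating by $t$ gives an analogous $2k$-term expression for $t[x,w]t^{-1}$, and inverting gives a $2k$-term expression for $[x,w]^{-1}$, so $[t,[x,w]] = t[x,w]t^{-1}\cdot [x,w]^{-1}$ is a product of $4k$ conjugates of relators in $R$. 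Setting $\rho := [t,[x,w]]\tJ^{-1}$, which is one of the defining relators of $\tG$, one has the free-group identity $\tJ = \rho^{-1}\cdot [t,[x,w]]$, displaying $\tJ$ as a product of $4k+1$ conjugates of defining relators. Thus $\Ar_{\tG}(\tJ)\leq 4k+1$, proving (c); in particular $\tJ=1$ in $\tG$ whenever $w=1$ in $G$, which is the easy direction of (b).

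The main obstacle is the converse in (b). I would prove its contrapositive by an HNN extension argument. Set $H := (G * \langle x\rangle)\times \langle\tJ : \tJ^2\rangle$, so that $\tJ$ is visibly nontrivial in $H$. If $w\neq 1$ in $G$, then $[x,w] = xwx^{-1}w^{-1}$ is cyclically reduced of syllable length $4$ in the free product $G*\langle x\rangle$ by the normal form theorem, hence has infinite order; this persists in $H$ because $\langle \tJ\rangle$ is a direct factor. Consequently $A := \langle [x,w],\tJ\rangle\leq H$ is isomorphic to $\Z\times\Z_2$, and the assignment $[x,w]\mapsto \tJ[x,w]$, $\tJ\mapsto \tJ$ extends to a well-defined automorphism $\phi:A\to A$. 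The HNN extension $H*_\phi$ with stable letter $t$ has precisely the presentation of $\tG$: the HNN relations $t\tJ t^{-1} = \tJ$ and $t[x,w]t^{-1} = \tJ[x,w]$ rearrange to $[t,\tJ]=1$ and $[t,[x,w]]=\tJ$, so $H*_\phi \cong \tG$. By Britton's lemma, $H$ embeds into this HNN extension, and since $\tJ\neq 1$ in $H$, we conclude $\tJ \neq 1$ in $\tG$, completing (b).
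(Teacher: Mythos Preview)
Your proof is correct and follows essentially the same route as the paper: part (a) by inspection, part (c) by the same area estimate $\Ar_{\tG}([t,[x,w]]) \leq 4\Ar_G(w)$ plus one use of the defining relator $[t,[x,w]]\tJ^{-1}$, and the nontrivial direction of (b) by recognizing $\tG$ as an HNN extension of $H = (G*\langle x\rangle)\times\langle\tJ\rangle$ when $w\neq 1$ and invoking the embedding of the base group. The only organizational difference is that you obtain the easy direction of (b) as a byproduct of your explicit decomposition in (c), whereas the paper folds both directions of (b) into the HNN discussion; your separation is arguably cleaner, since when $w=1$ the map $[x,w]\mapsto \tJ[x,w]$ is not an automorphism of $\langle [x,w],\tJ\rangle$ and the ``HNN extension'' degenerates.
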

 \begin{proof}
 	Part (a) is clear. For part (b), let 
 	\begin{equation*}
 	G' := \langle G,x,\tJ:  \tJ^2 = [x,\tJ] = [g,\tJ] = 1 \text{ for all } g \in G \rangle \footnote{Equivalently, in alternative language, this means that $G' := (G * \Z) \times \Z_2$, where $x$ is the generator of the $\Z$
 		factor, and $\tJ$ is the generator of the $\Z_2$ factor.},
 	\end{equation*}
 	The element $y =
 	[x,w]$ is equal to $1$ in $G'$ if and only if $w =1$. If $w \neq 1$ then
 	$y$ has infinite order. Hence the subgroup $\langle y, \tJ \rangle$ is
 	equal to $\Z \times \Z_2$ if $w \neq 1$, and $\Z_2$ if $w=1$. In both
 	cases, the homomorphism induced by $y \mapsto \tJ y$ and $\tJ \mapsto \tJ$
 	is an automorphism of this subgroup, and 
 	\begin{equation*}
 	\tG = \langle G', t : t y t^{-1} = \tJ y, t \tJ t^{-1} = \tJ\rangle
 	\end{equation*}
 	is the Higman-Neumann-Neumman (HNN) extension of $G'$ by this automorphism
 	(we refer to \cite[Chapter IV]{LS77} for the properties of HNN extensions).
 	As a result, $G'$ is a subgroup of $\tG$, and part (b) follows. 
 	
 	For part (c), if $w=1$ in $G$, then $\Ar_G(w^{-1}) = \Ar_G(w)$, so
 	\begin{equation*}
 	\Ar_{G'}([x,w]) \leq 2 \Ar_{G}(w) 
 	\end{equation*}
 	and similarly
 	\begin{equation*}
 	\Ar_{\tG}([t,[x,w]]) \leq 2 \Ar_{G'}([x,w]) \leq 4\Ar_{G}(w).
 	\end{equation*}
 	Thus we can use the relation $J = [t,[x,w]]$ to conclude that
 	$\Ar_{\tG}(J) \leq 4 \Ar_{G}(w) + 1$. 
 \end{proof}
 
 The last result we include in this section is a lemma which will be used to
 translate area calculations into bounds on distances between vectors in Hilbert
 spaces. If $u$ and $v$ are two vectors in a Hilbert space $\mcH$, we write $u
 \approx_{\eps} v$ to mean that $\norm{u - v} \leq \eps$. We use the standard
 terminology and notation of quantum information, so for instance, a state in a
 Hilbert space $\mcH$ is a unit vector $\ket{\psi}$ in $\mcH$. 
 \begin{defn}\label{D:bipartiterep}
 	Let $G = \langle S : R \rangle$ be a finitely-presented group. 
 	A \emph{$(\delta,\eps)$-bipartite representation of $G$ with respect to a
 		state $\ket{\psi}$ in a Hilbert space $\mcH$} is a pair of homomorphisms
 	$\Phi, \Phi' : \mcF(S) \arr \mcU(\mcH)$ such that
 	\begin{enumerate}[(i)]
 		\item $\Phi(r) \ket{\psi} \approx_{\eps} \ket{\psi}$ for all $r \in R$,
 		\item $\Phi(s)^{-1} \ket{\psi} \approx_{\eps} \Phi'(s) \ket{\psi}$ for all
 		$s \in S$, and
 		\item $\norm{[\Phi(s), \Phi'(t)] - \Id} \leq \delta$ for all
 		$s,t \in S$ (here $\Id$ represents the identity operator in $\mcU(\mcH)$). 
 	\end{enumerate}
 \end{defn}
 In Part (iii) and throughout this paper the notation $\norm{A}$ for an operator
 $A$ refers to the operator norm of $A$.  Part (i) of Definition
 \ref{D:bipartiterep} essentially says that $\Phi$ is an approximate
 representation of $G$ with respect to the state $\ket{\psi}$.  Parts (ii) and
 (iii) are less straightforward, but these conditions arise naturally in the
 theory of non-local games. 
 \begin{lemma}\label{lem::bipartiterep}
 	Let $(\Phi, \Phi')$ be a $(\delta,\eps)$-bipartite representation of a
 	finitely-presented group $G = \langle S : R \rangle$ with respect to a
 	state $\ket{\psi} \in \mcH$, and let $\ell$ be the length of the longest
 	relation in $R$. 
 	If $w \in \mcF(S)$ is equal to the identity in $G$, then 
 	\begin{equation*}
 	\Phi(w) \ket{\psi} \approx_{A(w)\cdot (\eps+\delta)} \ket{\psi},
 	\end{equation*}
 	where $A(w) \leq 5 \ell^2 \Ar_{G}(w)^2 + 2 \ell |w| \Ar_G(w)$.
 \end{lemma}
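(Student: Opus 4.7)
The plan is to bound $\norm{\Phi(w)\ket{\psi} - \ket{\psi}}$ by reducing it via Lemma \ref{lem::wordlengthbound} to a sum of bounds on conjugate factors $\Phi(z_i r_i^{a_i} z_i^{-1})$, each of which I would control using a ``mirror sublemma'' that exploits the bipartite structure of $\Phi$ and $\Phi'$.

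First, apply Lemma \ref{lem::wordlengthbound} to write $w = z_1 r_1^{a_1} z_1^{-1}\cdots z_k r_k^{a_k} z_k^{-1}$ with $k = \Ar_G(w)$, $r_i \in R$, $a_i \in \{\pm 1\}$, and $|z_i| \leq k\ell + \ell + |w|$.  Since each $\Phi(z_i r_i^{a_i} z_i^{-1})$ is unitary, the triangle inequality gives
\[
\norm{\Phi(w)\ket{\psi} - \ket{\psi}} \leq \sum_{i=1}^k \norm{\Phi(z_i r_i^{a_i} z_i^{-1})\ket{\psi} - \ket{\psi}},
\]
so it suffices to bound each conjugate factor. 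Next, prove the mirror sublemma: for any $v \in \mcF(S)$, $\Phi(v)^{-1}\ket{\psi} \approx_{O(|v|(\eps+\delta))} \Phi'(v)\ket{\psi}$. The idea is to process $\Phi(v)^{-1}$ letter by letter from innermost to outermost: use (ii) to mirror the innermost $\Phi(s_j)^{-a_j}\ket{\psi}$ as $\Phi'(s_j)^{a_j}\ket{\psi}$ at cost $\eps+\delta$, then use (iii) to commute the newly-inserted $\Phi'(s_j)^{a_j}$ past the remaining $\Phi$-operators so that the next $\Phi$-letter becomes innermost, and repeat.

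Given the mirror sublemma, bound each conjugate factor by rewriting
\[
\norm{\Phi(zr^a z^{-1})\ket{\psi} - \ket{\psi}} = \norm{(\Phi(r)^a - \Id)\Phi(z)^{-1}\ket{\psi}}
\]
using unitarity of $\Phi(z)$.  Substituting $\Phi(z)^{-1}\ket{\psi} \approx \Phi'(z)\ket{\psi}$ via the sublemma (contributing a factor of $2$ from the operator-norm bound $\norm{\Phi(r)^a - \Id} \leq 2$), then commuting $\Phi(r)^a$ past $\Phi'(z)$ using (iii) at cost $\ell|z|\delta$, and finally applying (i) to get $\Phi(r)^a\ket{\psi} \approx_\eps \ket{\psi}$, yields a per-factor bound of order $\ell|z|(\eps+\delta)$.

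Finally, summing over $i = 1,\ldots,k$ and using $|z_i| \leq k\ell + \ell + |w|$ gives
\[
\norm{\Phi(w)\ket{\psi} - \ket{\psi}} \leq O(\ell)\sum_{i=1}^k |z_i|(\eps+\delta) = O(k^2\ell^2 + k\ell|w|)(\eps+\delta),
\]
and tracking the implicit constants delivers $A(w) \leq 5\ell^2 \Ar_G(w)^2 + 2\ell|w|\Ar_G(w)$. I expect the main obstacle to be establishing the mirror sublemma with a genuinely \emph{linear} bound in $|v|$: a naive induction that fully commutes each newly-inserted $\Phi'$-letter all the way out before processing the next $\Phi$-letter accumulates commutator costs of order $\binom{|v|}{2}\delta$, which would propagate to a cubic-in-$\Ar_G(w)$ final estimate and fall short of the claimed bound. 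Achieving the linear mirror estimate should require an interleaved argument in which each $\Phi'$-letter is only commuted partway out before the next mirroring step, so that the total (iii)-cost scales linearly rather than quadratically in $|v|$.
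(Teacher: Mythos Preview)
Your outline is exactly the paper's approach: decompose $w$ via Lemma~\ref{lem::wordlengthbound}, bound each conjugate $\Phi(z_i r_i^{a_i} z_i^{-1})\ket{\psi}$ by mirroring $\Phi(z_i)^{-1}\ket{\psi}$ to a $\Phi'$-expression, commuting $\Phi(r_i)^{a_i}$ past that $\Phi'$-block at cost $|r_i||z_i|\delta$, applying (i), and then undoing the mirror. The paper compresses all of this into a single displayed chain and then invokes Lemma~\ref{lem::wordlengthbound}.

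Your worry about the mirror step is well-placed, and the paper does not resolve it: it simply writes $\Phi(z)\Phi(r)^a\Phi(z)^{-1}\ket{\psi}\approx_{|z|\eps}\Phi(z)\Phi(r)^a\Phi'(z)^{-1}\ket{\psi}$ without further argument, which is precisely your linear mirror sublemma. As you note, processing the letters of $z^{-1}$ one at a time forces each newly-created $\Phi'$-letter to commute past the remaining $\Phi$-letters before the next mirroring step can be applied, and this costs $\binom{|z|}{2}\delta$ rather than $O(|z|)\delta$. I do not see how any interleaving avoids this: once a $\Phi'$-letter is created it must be moved out of the way of the next $\Phi(s_j)^{\pm1}$ acting on $\ket{\psi}$, and there is no other relation available that lets a $\Phi'$-letter be absorbed for free. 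So the honest per-factor cost is $(2|z|+1)\eps + O(|z|^2+\ell|z|)\delta$, which after summing gives $A(w)=O(\ell^2\Ar_G(w)^3 + \ell|w|\Ar_G(w)^2 + |w|^2\Ar_G(w))$ rather than the stated quadratic expression. For every downstream use in the paper only polynomial dependence on $\Ar_G(w)$, $\ell$, $|w|$ matters, so this discrepancy is harmless there; but you should not expect to recover the exact constant $5\ell^2\Ar_G(w)^2+2\ell|w|\Ar_G(w)$ from the argument as written.
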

 \begin{proof}
 	If $r \in R$, then $\Phi(r) \ket{\psi} \approx_{\eps} \ket{\psi}$, and
 	consequently $\Phi(r)^{-1} \ket{\psi} \approx_{\eps} \ket{\psi}$. 
 	Thus for any $r \in R$, $z \in \mcF(S)$, and $a \in \{\pm 1\}$, 
 	\begin{align*}
 	\Phi(z r^a z^{-1}) \ket{\psi} & = \Phi(z) \Phi(r)^a \Phi(z)^{-1} \ket{\psi} \approx_{|z|\eps} \Phi(z) \Phi(r)^a \Phi'(z)^{-1} \ket{\psi} \\
 	& \approx_{|r||z| \delta} \Phi(z) \Phi'(z)^{-1} \Phi(r)^a \ket{\psi} \approx_{\eps} \Phi(z) \Phi'(z)^{-1} \ket{\psi} \\
 	& \approx_{|z| \eps} \Phi(z) \Phi(z)^{-1} \ket{\psi} = \ket{\psi}.
 	\end{align*}
 	We conclude that $\Phi(z r^a z^{-1}) \ket{\psi} \approx_{(2|z|+1)\eps + \ell |z| \delta} \ket{\psi}$. 
 	The result follows from Lemma \ref{lem::wordlengthbound}.
 \end{proof}
 
 \section{Approximately-commuting operator strategies and linear system games}
 
 A \emph{two-party Bell scenario} $(\mcI_A,\mcI_B,\mcO_A^*, \mcO_B^*)$ consists
 of finite input sets $\mcI_A, \mcI_B$, a finite set of outputs $\mcO_A^x$ for
 every $x \in \mcI_A$, and a finite set of outputs $\mcO_B^y$ for every $y \in
 \mcI_B$.\footnote{The sets $\mcO_A^x$ and $\mcO_B^y$ are often assumed to be
 	independent of the inputs $x$ and $y$. However, this assumption is not
 	essential, since we can make the output sets independent of the input sets by
 	adding filler answers to make all output sets the same size, and stipulating
 	that Alice and Bob lose if they output one of the filler answers.  When
 	working with linear system games, it is more convenient to have the output sets
 	depend on the inputs.}
 The number of outputs in a Bell scenario is the maximum of
 $|\mcO_A^x|$ and $|\mcO_B^y|$ over $x \in \mcI_A$ and $y \in \mcI_B$. A
 \emph{two-player non-local game} consists of a Bell scenario
 $(\mcI_A,\mcI_B,\mcO_A^*, \mcO_B^*)$, a function $V(\cdot,\cdot |x,y) :
 \mcO_A^x \times \mcO^y_B \arr \{0,1\}$ for every $x \in \mcI_A$ and $y \in
 \mcI_B$, and a probability distribution $\pi$ on $\mcI_A \times \mcI_B$. In the
 operational interpretation of the game, the referee sends players Alice and Bob
 inputs $x \in \mcI_A$ and $y \in \mcI_B$ with probability $\pi(x,y)$, the
 players reply with outputs $a \in \mcO_A^x$ and $b \in \mcO_B^y$, and the
 players win if and only if $V(a,b|x,y) = 1$. 
 
 In a non-local game, the players are not usually allowed to communicate while
 the game is in progress. Thus, in a quantum strategy for a game, it's assumed
 that each player determines their output by measuring their own local system.
 Locality can be enforced in one of two ways: by requiring that the joint system
 is the tensor product of the subsystems, or by requiring that measurement
 operators for different players commute with each other. Strategies of the
 former type are called tensor-product strategies, while strategies of the
 latter type are called commuting-operator strategies. Tensor-product strategies
 are commuting-operator strategies by definition, and finite-dimensional
 commuting-operator strategies can be turned into equivalent tensor-product
 strategies. In infinite dimensional Hilbert spaces, there are
 commuting-operator strategies for which the corresponding correlations do not
 have a tensor-product model \cite{Sl16}. However, it's still an open question
 as to whether all correlations arising from commuting-operator strategies can
 be realized as a limit of tensor-product strategies. By a theorem of Ozawa,
 this question is equivalent to the Connes embedding problem.  In \cite{Oz13b,
 	CV15}, the notion of a quantum strategy has been generalized to
 approximately-commuting strategies, where Alice and Bob's systems are allowed
 to interact slightly. 
 In this paper, we focus on the case of approximately-commuting operator
 strategies.  Unlike \cite{CV15}, we use projective measurements rather than the
 more general POVM measurements in this definition. We refer to Remark
 \ref{rmk::povm} for some of the consequences of this difference. 
 \begin{defn}\label{defn::AC}
 	A \emph{$\delta$-approximately-commuting operator strategy $\mcS$} (or
 	\emph{$\delta$-AC operator strategy} for short) for a Bell scenario 
 	$(\mcI_A,\mcI_B,\mcO_A^*,\mcO_B^*)$ consists of a Hilbert space $\mcH$, a
 	projective measurement $\{P^x_a\}_{a \in \mcO_A^x}$ on $\mcH$ for every $x
 	\in \mcI_A$, a projective measurement $\{Q^y_b\}_{b \in \mcO_B^y}$ on
 	$\mcH$ for every $y \in \mcI_B$, and a state $\ket{\psi} \in \mcH$ such
 	that
 	\begin{equation*}
 	\norm{P^x_a Q^y_b - Q^y_b P^x_a} \leq \delta
 	\end{equation*}
 	for all $(x,y) \in \mcI_A \times \mcI_B$ and $(a,b) \in \mcO_A^x \times
 	\mcO_B^y$. 
 	A \emph{$\delta$-approximately-commuting quantum} (or
 	\emph{$\delta$-AC quantum}) \emph{strategy} is a $\delta$-AC operator strategy in
 	which $\mcH$ is finite-dimensional. 
 	
 	Let $\mcG = (\mcI_A,\mcI_B,\mcO_A^*,\mcO_B^*,V,\pi)$ be a non-local game.
 	The \emph{winning probability of $\mcG$ with strategy $\mcS$} is
 	\begin{align*}
 	\omega(\mcG;\mcS) = \left|\sum_{x \in \mcI_A, y \in \mcI_B} \pi(x,y) \sum_{a \in \mcO_A, b \in \mcO_B} V(a,b|x,y) \bra{\psi} P^x_a Q^y_b \ket{\psi} \right|.
 	\end{align*}
 	The $\delta$-AC operator value $\omega^{co}_\delta(\mcG)$ (resp.
 	$\delta$-AC quantum value $\omega^*_{\delta}(\mcG)$) of $\mcG$ is defined
 	to be the supremum of $\omega(\mcG;\mcS)$ across $\delta$-AC operator
 	strategies (resp. $\delta$-AC quantum strategies). 
 \end{defn}
 With this definition, a \emph{commuting-operator strategy} is simply a $0$-AC
 operator strategy, and the usual \emph{commuting-operator value} of a game is
 $\omega^{co}(\mcG) := \omega^{co}_0(\mcG)$. 
 Since commuting-operator strategies are the same as tensor product strategies
 in finite dimensions, a \emph{(tensor-product) quantum strategy} is simply a
 $0$-AC quantum strategy, and the usual \emph{quantum value} of a game is
 $\omega^*(\mcG) := \omega^*_0(\mcG)$. 
 Note that when $\delta=0$, the absolute value can be dropped
 in the definition of $\omega(\mcG;\mcS)$. When $\delta > 0$, the values
 $\bra{\psi} P^x_a Q^y_b \ket{\psi}$ can be complex, and the absolute value is
 necessary. 
 This also means that $\omega(\mcG,\mcS)$ cannot necessarily be interpreted
 as a probability when $\mcS$ is approximately but not exactly commuting.
 
 We look at a specific class of non-local games called linear system games. Let
 $Mx =c$ be an $m \times n$ linear system over $\Z_2$, so $M \in \Z_2^{m \times
 	n}$ and $c \in \Z_2^m$. For each $1 \leq i \leq m$, let $V_i = \{ 1 \leq j \leq
 n : M_{ij} \neq 0\}$. The linear system game $\mcG_{Mx=c}$ is the non-local
 game with 
 \begin{equation*}
 \mcI_A = \{1,\ldots,m\},\quad \mcI_B = \{1,\ldots,n\}, 
 \end{equation*}
 \begin{equation*}
 \mcO_A^i = \left\{a \in \Z_2^{V_i} : \sum_{j \in V_i} a_j = c_i \right\}, 
 \quad \mcO_B^j = \Z_2,
 \end{equation*}
 \begin{equation*}
 V(a,b|i,j) = \begin{cases} 1 & j \not\in V_i \text{ or } a_j = b \\
 0 & \text{otherwise} 
 \end{cases}, 
 \end{equation*}
 and $\pi$ the uniform distribution over pairs $(i,j)$ such that $j \in V_i$.
 In other words, Alice receives the index $i$ of an equation and Bob receives
 the index $j$ of a variable, chosen uniformly at random from pairs $(i,j)$ with
 $j \in V_i$. Alice replies with a satisfying assignment to the variables which
 appear in the $i$th equation, and Bob replies with an assignment for the $j$th
 variable.  The players win if Alice and Bob both give the same assignment to
 variable $j$. 
 
 For linear system games, it is often convenient to express strategies in terms
 of observables, rather than measurement operators (see, for instance,
 \cite{CM14, CLS16}). If $\mcS = (\mcH, \{P^{i}_a\}_{a \in \mcO_A^i}, \{Q^j_b\}_{b \in
 	\Z_2}, \ket{\psi})$ is a $\delta$-AC strategy for $\mcG_{Mx=c}$, the corresponding
 observables are 
 \begin{equation}\label{E:observablesA}
 A_{ij} := \sum_{a \in \mcO_A^i} (-1)^{a_j} P^i_a \text{ for } 1 \leq i \leq m,
 j \in V_i, 
 \end{equation}
 and
 \begin{equation}\label{E:observablesB}
 B_j := Q^j_0 - Q^j_1 \text{ for } 1 \leq j \leq n.
 \end{equation}
 These operators are $\pm 1$-valued observables (meaning, self-adjoint unitary
 operators) satisfying the equations
 \begin{equation}\label{E:observables1}
 \prod_{j} A_{ij}^{M_{ij}} = (-\Id)^{c_i} \text{ for all } 1 \leq i \leq m, 
 \end{equation}
 \begin{equation}\label{E:observables2}
 [A_{ij},A_{ij'}]=\Id \text{ whenever } j,j' \in V_i \text{ for some } 1 \leq i \leq m, \text{ and}
 \end{equation} 
 \begin{equation}\label{E:observables3}
 \norm{[A_{ij},B_k] - \Id} \leq 2^{|V_i|+1} \delta 
 \text{ for all }1 \leq i \leq m, j \in V_i,\text{ and }1 \leq k \leq n. 
 \end{equation}
We can recover the projections $P^i_a$, $a \in \mcO_A^i$, and $Q^j_b$,
$b \in \mcO_B^j$, from the observables $A_{ik}$ and $B_j$ via the formulas
\begin{equation}\label{E:observables4}
 	P_a^i = \prod_{k \in V_i} \left(\frac{\Id + (-1)^{a_k} A_{ik}}{2}\right)
    \text{ and } Q^j_b = \frac{1 + (-1)^b B_j}{2}.
\end{equation}

 We define \emph{bias of strategy $\mcS$} to be
 \begin{equation*}
 \beta(\mcG_{Mx=c}; \mcS) := \sum_{1 \leq i \leq m} \sum_{j \in V_i}
 \pi(i,j) \bra{\psi} A_{ij} B_j \ket{\psi}.
 \end{equation*}
 It is not hard to see that 
 \begin{equation*}
 \omega(\mcG_{Mx=c}; \mcS) = \frac{1}{2} | \beta(\mcG_{Mx=c},\mcS) + 1|,
 \end{equation*}
 so we can work with the winning probability using observables as well.
 
 
 It follows from \cite{CLS16}
 that when $\delta=0$, perfect commuting-operator strategies of
 $\mcG_{Mx=c}$ can be understood using the following group.
 \begin{defn}\label{def::solutiongroup}
 	Let $Mx=c$ be an $m \times n$ linear system over $\Z_2$. Then
 	the \emph{solution group} of the system is the finitely presented
 	group $\Gamma_{Mx=c}$ generated by $x_1,\ldots,x_n, J$, and satisfying relations
 	\begin{enumerate}
 		\item $[x_i,J] = x_i^2 = J^2 = 1$ for all $1 \leq i \leq n$,
 		\item $\prod_{j} x_j^{M_{ij}} = J^{c_i}$ for all $1 \leq j \leq m$, and
 		\item $[x_j, x_k ] = 1$ if there is some $1 \leq i \leq m$ with $M_{ij}, M_{ik} \neq 0$.
 	\end{enumerate}
 	We consider  $\Gamma = \Gamma_{Mx=c}$ to be a group over $\Z_2$ with
 	$J_{\Gamma}$ equal to the generator $J$. 
 \end{defn}
 
 In particular, we can characterize when the optimal winning probability of the
 game is equal to $1$ using this group.
 \begin{theorem}[\cite{CLS16,SV17}]\label{thm::perfectstrats}
 	Let $Mx=c$ be a linear system over $\Z_2$. Then
 	\begin{enumerate}[(a)]
 		\item $\omega^{co}(\mcG_{Mx=c}) = 1$ if and only if $J \neq 1$ in $\Gamma_{Mx=c}$, and
 		\item $\omega^*(\mcG_{Mx=c})=1$ if and only if $J$ is non-trivial in approximate
 		representations of $\Gamma_{Mx=c}$. 
 	\end{enumerate}
 \end{theorem}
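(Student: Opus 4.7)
The plan is to convert between commuting-operator strategies of $\mcG_{Mx=c}$ and unitary representations of the solution group $\Gamma_{Mx=c}$, using the observable reformulation (\ref{E:observablesA})--(\ref{E:observables4}). I will give the argument for part (a) in detail, then indicate how the same translation carries over in the approximate setting for part (b).

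For the forward direction of (a), let $\mcS$ be a perfect commuting-operator strategy with observables $A_{ij}, B_j$ and state $\ket{\psi}$. Since $\omega(\mcG_{Mx=c}; \mcS) = 1$, one has $\bra{\psi} A_{ij} B_j \ket{\psi} = 1$ for each pair $(i,j)$ with $j \in V_i$, and because $A_{ij}, B_j$ are commuting $\pm 1$ observables this forces the key identity $A_{ij} \ket{\psi} = B_j \ket{\psi}$. Let $\mcH_0$ be the closure of the subspace obtained by acting on $\ket{\psi}$ by arbitrary words in the $B_j$'s. Using $A_{ij} \ket{\psi} = B_j \ket{\psi}$ to substitute $B$'s by $A$'s, together with the Alice/Bob commutativity and (\ref{E:observables1})--(\ref{E:observables2}), I would verify that the $B_j$'s satisfy the defining relations of $\Gamma_{Mx=c}$ on $\mcH_0$: the relation $[B_j,B_k] = \Id$ for $j,k \in V_i$ follows by pushing the trailing $B$'s through the commuting $A$-operators to reduce to a calculation on $\ket{\psi}$, where it collapses to $[A_{ij},A_{ik}] = \Id$ from (\ref{E:observables2}); the relation $\prod_j B_j^{M_{ij}} = (-\Id)^{c_i}$ on $\mcH_0$ similarly reduces to (\ref{E:observables1}). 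This produces a unitary representation $\pi : \Gamma_{Mx=c} \arr \mcU(\mcH_0)$ with $\pi(x_j) = B_j|_{\mcH_0}$ and $\pi(J) = -\Id$; since $\pi(J) \neq \Id$, $J$ cannot equal $1$ in $\Gamma_{Mx=c}$.

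For the converse direction of (a), suppose $J \neq 1$ in $\Gamma := \Gamma_{Mx=c}$. I would take $\mcH = \ell^2 \Gamma$ with $\ket{\psi} = (\ket{1_\Gamma} - \ket{J})/\sqrt{2}$, a unit vector since $1_\Gamma \neq J$, and define $A_{ij} := L(x_j)$ and $B_j := R(x_j)$, which are $\pm 1$ observables because $x_j^2 = 1$ in $\Gamma$, and satisfy $[A_{ij}, B_k] = 0$ since $[L(g),R(h)] = 0$ for all $g,h$. The Alice-side relations (\ref{E:observables1})--(\ref{E:observables2}) follow from Definition \ref{def::solutiongroup} together with $L(J)\ket{\psi} = -\ket{\psi}$, so the projective measurements are well-defined via (\ref{E:observables4}); centrality of $J$ gives $L(x_j)\ket{\psi} = R(x_j)\ket{\psi}$, hence $\bra{\psi} A_{ij} B_j \ket{\psi} = 1$, and the strategy wins with probability one.

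For (b), the same translation has to be carried out at the level of finite-dimensional approximate representations. Given a sequence of finite-dimensional tensor-product strategies $\mcS_n$ with $\omega(\mcG_{Mx=c};\mcS_n) \arr 1$, setting $\eps_n = 1 - \omega(\mcS_n)$ gives $\|A_{ij}\ket{\psi_n} - B_j\ket{\psi_n}\| = O(\sqrt{\eps_n})$, and feeding this into the forward-direction calculation produces a sequence of finite-dimensional approximate representations of $\Gamma$ in which the image of $J$ is $-\Id$, hence bounded away from $\Id$. Conversely, from finite-dimensional approximate representations in which $J$ is nontrivial, I would mimic the regular-representation construction using $\pi_n \otimes \overline{\pi_n}$ on a maximally entangled state supported on the $(-1)$-eigenspace of $\pi_n(J)$ to obtain tensor-product strategies with winning probability tending to one. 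The main obstacle is part (b): pinning down the notion of ``$J$ nontrivial in approximate representations'' sharply enough that both implications hold, and tracking the accumulation of approximation errors when substituting $A_{ij}\ket{\psi} \leftrightarrow B_j\ket{\psi}$ in the relation calculations; part (a) reduces cleanly to an algebraic manipulation once the identity $A_{ij}\ket{\psi} = B_j\ket{\psi}$ is in hand.
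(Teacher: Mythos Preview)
The paper does not prove this theorem; it is quoted from \cite{CLS16,SV17}, and the definition of ``non-trivial in approximate representations'' is explicitly deferred to \cite{SV17}. The only place the relevant construction surfaces in the paper is the proof of Proposition~\ref{P:pzkwagonwheel}, which recalls (from \cite{CLS16}) the perfect commuting-operator strategy for the converse of (a) using exactly your left/right regular representation on $\ell^2\Gamma_{Mx=c}$ with state $\ket{\psi}=(\ket{1}-\ket{J})/\sqrt{2}$. So your approach for (a) is the standard one and matches what the paper invokes.

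One small technical point in your converse of (a): with $A_{ij}=L(x_j)$ on all of $\ell^2\Gamma$, relation \eqref{E:observables1} does \emph{not} hold as an operator identity, since $\prod_j L(x_j)^{M_{ij}}=L(J^{c_i})$ rather than $(-\Id)^{c_i}$. Consequently the projections $\{P^i_a\}_{a\in\mcO^i_A}$ from \eqref{E:observables4} fail to sum to the identity when $c_i=1$, so as written you do not have a projective measurement. The statement $L(J)\ket{\psi}=-\ket{\psi}$ is not enough; you need $L(J)=-\Id$ on the whole Hilbert space of the strategy. The fix is to restrict everything to the $(-1)$-eigenspace of $L(J)$, which is invariant under all $L(g)$ and $R(g)$ because $J$ is central, and contains $\ket{\psi}$. (The paper's own recollection in Proposition~\ref{P:pzkwagonwheel} glosses over the same point.) Your forward direction of (a), passing to the $B$-cyclic subspace $\mcH_0$ and using $A_{ij}\ket{\psi}=B_j\ket{\psi}$ together with Alice--Bob commutation to transport the relations, is correct. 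For (b) you rightly flag that the substance lies in the precise definition from \cite{SV17} and the error-tracking; the paper adds nothing beyond the citation.
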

 For the definition of \emph{non-trivial in approximate representations}, we
 refer to \cite{SV17}. 
 
 Near-perfect finite-dimensional strategies of $\mcG_{Mx=c}$
 correspond to approximate representations of $\Gamma_{Mx=c}$ \cite{SV17}. 
 We want to develop this theory when $\delta > 0$. 
 \begin{prop}\label{prop::approxstrategy}
 	Let $Mx=c$ be an $m \times n$ linear system, let $V_i := \{1 \leq j \leq n:
 	M_{ij} \neq 0\}$, let $\tR := \max_i |V_i|$ be the maximum number
 	of non-zero entries in any row, and let $\tM := \sum_{i=1}^m |V_i|$ be the number
 	of non-zero entries in $M$. Suppose $\mcS = (\mcH, \{P^x_a\}, \{Q^y_b\},
 	\ket{\psi})$ is a $\delta$-AC operator strategy with $\omega(\mcG_{Mx=c};
 	\mcS) \geq 1-\eps$ for some $\eps,\delta \geq 0$. Let $A_{ij}$, $B_k$ be
 	the corresponding observables defined in Equations \eqref{E:observablesA}
 	and \eqref{E:observablesB}. Then
 	\begin{enumerate}[(a)]
 		\item $A_{ij} \ket{\psi} \approx_{2 \sqrt{\tM (\eps + 2^{\tR-1} \delta)}} B_j \ket{\psi}$
 		for all $1 \leq i \leq m$ and $j \in V_i$, 
 		\item $\prod_{j=1}^m B_{ij}^{M_{ij}} \ket{\psi} \approx_{2 \tR \sqrt{\tM (\eps + 2^{\tR-1} \delta)} + \binom{\tR}{2} 2^{\tR+1} \delta} (-\Id)^{c_i} \ket{\psi}$
 		for all $1 \leq i \leq m$, and
 		\item $[B_j, B_k] \ket{\psi} \approx_{8 \sqrt{\tM (\eps + 2^{\tR-1} \delta)} + 6\cdot  2^{\tR+1} \delta} \ket{\psi}$ whenever there is $1
 		\leq i \leq m$ with $j,k \in V_i$.
 	\end{enumerate}
 \end{prop}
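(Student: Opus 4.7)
The plan is to use the hypothesis $\omega(\mcG_{Mx=c};\mcS)\ge 1-\eps$ together with the identity $\omega(\mcG_{Mx=c};\mcS) = \tfrac{1}{2}|\beta(\mcG_{Mx=c};\mcS)+1|$ (stated in the text) to establish (a), and then bootstrap (a) together with the observable relations \eqref{E:observables1}--\eqref{E:observables3} to obtain (b) and (c) by propagating $B_j$-to-$A_{ij}$ substitutions through the relevant words.

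For (a), I would note that each term $\bra{\psi}A_{ij}B_j\ket{\psi}$ has modulus at most $1$, so $|\beta|\le 1$. Expanding $|\beta+1|^2 = 1+2\Re\beta + |\beta|^2 \ge (2-2\eps)^2$ and combining with $|\beta|\le 1$ forces $\Re\beta \ge 1 - O(\eps)$. Since $\pi$ places weight $1/\tM$ on each pair $(i,j)$ with $j\in V_i$ and each real part is itself bounded by $1$, the nonnegative quantity $1-\Re\bra{\psi}A_{ij}B_j\ket{\psi}$ is $O(\tM\eps)$ individually for every such pair. Combining this with the identity
\begin{equation*}
\norm{(A_{ij}-B_j)\ket{\psi}}^2 = \bra{\psi}(A_{ij}-B_j)^2\ket{\psi} = 2\bigl(1-\Re\bra{\psi}A_{ij}B_j\ket{\psi}\bigr),
\end{equation*}
which uses $A_{ij}^2 = B_j^2 = \Id$, then yields (a). The additive $2^{\tR-1}\delta$ inside the square root in the stated bound should come from a careful treatment of the imaginary part of $\bra{\psi}A_{ij}B_j\ket{\psi}$, which can be nonzero once $\delta>0$ and must be absorbed when passing from the modulus inequality on $\beta$ to control of the real part of each summand.

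For (b), I would rewrite $\prod_j B_j^{M_{ij}}\ket{\psi}$ as $\prod_{j\in V_i} B_j\ket{\psi}$ (with the product in a fixed order) and replace each $B_j$ by $A_{ij}$ starting from the innermost factor. Each of the $|V_i|\le\tR$ replacements incurs one application of (a), contributing $2\tR\sqrt{\tM(\eps+2^{\tR-1}\delta)}$. After each replacement I would slide the newly inserted $A_{ij}$ to the leftmost position past the remaining $B$'s, using the fact that \eqref{E:observables3} implies $\norm{A_{ij}B_k - B_k A_{ij}}\le 2^{\tR+1}\delta$ (obtained by multiplying $A_{ij}B_kA_{ij}^{-1}B_k^{-1}-\Id$ on the right by $B_kA_{ij}$). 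The total number of such slides across the whole process is $\binom{\tR}{2}$, contributing $\binom{\tR}{2}\,2^{\tR+1}\delta$. Once all factors have become $A$'s, the exact intra-row commutation \eqref{E:observables2} lets me reorder them arbitrarily, and Alice's constraint \eqref{E:observables1} gives the desired $(-\Id)^{c_i}\ket{\psi}$.

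For (c), I would first show $B_j B_k\ket{\psi}\approx B_k B_j\ket{\psi}$ by the chain: replace the inner $B_k$ by $A_{ik}$ using (a); commute $B_j$ past $A_{ik}$ via the approximate commutator from \eqref{E:observables3}; replace the inner $B_j$ by $A_{ij}$; apply the exact commutation $[A_{ij},A_{ik}]=\Id$ from \eqref{E:observables2} (valid because $j,k\in V_i$); then reverse the three previous steps with the roles of $j$ and $k$ interchanged. Applying the unitary $B_jB_k$ to both sides of this approximate equality and using $B_j^2=B_k^2=\Id$ gives $[B_j,B_k]\ket{\psi}=(B_jB_k)^2\ket{\psi}\approx\ket{\psi}$. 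The main obstacle I expect is matching exact prefactors in (b) and (c): errors from (a) and from the $\delta$-approximate commutator accumulate across chains whose length scales linearly in $\tR$, and recovering the stated $2^{\tR}$- and $\binom{\tR}{2}$-type coefficients requires careful bookkeeping rather than any new idea.
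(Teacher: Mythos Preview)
Your approach is essentially the same as the paper's. For part (a), the paper does exactly the ``careful treatment of the imaginary part'' you anticipate: it bounds $|2\Im\beta| \le 2^{\tR+1}\delta$ directly from \eqref{E:observables3}, then uses $(1-\eps)^2 \le \tfrac{1}{4}\bigl((\Re\beta+1)^2 + (\Im\beta)^2\bigr)$ to extract $1-\Re\beta \le 2\eps + 2^{\tR}\delta$, and concludes via the same identity $\|A_{ij}\ket{\psi}-B_j\ket{\psi}\|^2 = 2(1-\Re\bra{\psi}A_{ij}B_j\ket{\psi})$ you wrote down. Your shortcut via $|\beta|\le 1$ is valid and would give a bound of the form $2\sqrt{2\tM\eps}$ with no $\delta$ at all, which is actually fine for downstream use but does not literally match the stated constant; the paper's route via $\Im\beta$ is what produces the exact expression $2\sqrt{\tM(\eps+2^{\tR-1}\delta)}$.

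For parts (b) and (c), your substitute-then-slide argument is precisely what the paper does (the paper writes out the chain for (b) and says (c) ``follows similarly''). Your chain for (c) in fact yields $4\tau + 2\cdot 2^{\tR+1}\delta$, slightly better than the $4\tau + 6\cdot 2^{\tR+1}\delta$ stated; the paper's constant is not sharp and either bound suffices.
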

 \begin{proof}
 	For part (a), any two unit vectors $\ket{\psi}$ and $\ket{\phi}$ satisfy
 	$\ket{\psi} \approx_2 \ket{\phi}$, so we can assume that $\eps + 2^{\tR-1}
 	\delta \leq 1$.  Write $\beta$ for $\beta(\mcG_{Mx=c}, \mcS)$, and observe
 	that
 	\begin{align*}
 	|2 \Im \beta| & = |\beta - \overline{\beta}| = \left| \sum_{i,j} \pi(i,j)
 	\bra{\psi} A_{ij} B_j - B_j A_{ij} \ket{\psi} \right| \\
 	& \leq \sum_{i,j} \pi(i,j) \norm{A_{ij} B_j - B_j A_{ij}} \leq 2^{\tR+1} \delta
 	\end{align*}
 	by Equation \eqref{E:observables3}. Since $\omega(\mcG_{Mx=c};\mcS) \geq 1-\eps$,
 	we have that
 	\begin{equation*}
 	(1 -\eps)^2 \leq \left| \frac{\beta + 1}{2} \right|^2 = 
 	\frac{(\Re \beta + 1)^2 + (\Im \beta)^2}{4} \leq \frac{(\Re \beta+1)^2 + 
 		\left(2^{\tR} \delta\right)^2}{4}.
 	\end{equation*}
 	Since $A_{ij} \ket{\psi}$ and $B_j \ket{\psi}$ are unit vectors, $-1 \leq
 	\Re \beta \leq 1$, and in particular $\Re\beta + 1 \geq 0$. Thus 
 	\begin{equation*}
 	\Re \beta + 1 \geq \sqrt{ 4(1 - \eps)^2 - \left(2^{\tR} \delta \right)^2}
 	= \sqrt{(2 - 2\eps - 2^{\tR} \delta)(2 - 2\eps + 2^{\tR} \delta)}
 	\geq 2 - 2 \eps - 2^{\tR} \delta, 
 	\end{equation*}
 	where the last inequality holds because of the assumption
 	$2 \eps + 2^{\tR} \delta \leq 2$. We conclude that $\Re \beta \geq 1 - 2
 	\eps - 2^{\tR} \delta$, or $1 - \Re \beta \leq 2 \eps + 2^{\tR} \delta$. 
 	
 	Now $\pi(i,j) = 1/\tM$ for all $1 \leq i \leq m$, $j \in V_i$, so 
 	\begin{equation*}
 	1 - \Re \beta = 
 	\frac{1}{\tM} \sum_{i,j} (1 - \Re \bra{\psi} A_{ij} B_j \ket{\psi}) 
 	\leq 2 \eps + 2^{\tR} \delta.
 	\end{equation*}
 	Since $\Re \bra{\psi} A_{ij} B_j \ket{\psi} \leq 1$, we have that 
 	$1 - \Re \bra{\psi} A_{ij} B_j \ket{\psi} \leq 2 \tM (\eps + 2^{\tR - 1} \delta)$
 	for all $1 \leq i \leq m$ and $j \in V_i$.
 	So
 	\begin{equation*}
 	\norm{A_{ij}\ket{\psi} -  B_j \ket{\psi}}^2 = 2 - 2 \Re \bra{\psi} A_{ij} B_j \ket{\psi} \leq 4 \tM (\eps + 2^{\tR-1} \delta), 
 	\end{equation*}
 	finishing the proof of part (a). 
 	
 	For parts (b) and (c), let $\tau = 2 \sqrt{\tM (\eps + 2^{\tR-1} \delta)}$. Given
 	$1 \leq i \leq m$, let $V_i = \{j_1,\ldots,j_k\}$, where $1 \leq j_1 <
 	\ldots < j_k \leq n$. Then
 	\begin{align*}
 	B_{j_1} \cdots B_{j_k} \ket{\psi} \approx_{\tau} B_{j_1} \cdots B_{j_{k-1}}
 	A_{i j_k} \ket{\psi} \approx_{(k-1) 2^{\tR+1} \delta} A_{i j_k}
 	B_{j_1} \cdots B_{j_{k-1}} \ket{\psi}.
 	\end{align*}
 	Continuing this pattern, we see that
 	\begin{align*}
 	B_{j_1} \cdots B_{j_k} \approx_{k \tau + \binom{k}{2} 2^{\tR+1} \delta} 
 	A_{ij_k} A_{ij_{k-1}} \cdots A_{ij_1} \ket{\psi} = (-\Id)^{c_i} \ket{\psi},
 	\end{align*}
 	where the last equality is Equation \eqref{E:observables1}. Part (c) follows
 	similarly from Equation \eqref{E:observables2}.
 \end{proof}
 
 \begin{cor}\label{cor::approxstrategy}
 	Using the notation and hypotheses of Proposition \ref{prop::approxstrategy},
 	if we define $\Phi, \Phi' : \mcF(x_1,\ldots,x_n,J) \arr \mcU(\mcH)$
 	by
 	\begin{equation*}
 	\Phi(x_j) = B_j \text{ for all } 1 \leq j \leq n, \quad
 	\Phi(J) = -\Id 
 	\end{equation*}
 	and
 	\begin{equation*}
 	\Phi'(x_j) = \begin{cases} 
 	A_{ij} & \text{ any } i \text{ such that } j \in V_i \\
 	\Id & \text{ if no such } i \text{ exists}
 	\end{cases}, \quad
 	\Phi'(J) = -\Id
 	\end{equation*}
 	then $(\Phi,\Phi')$ is a $(\tau, \kappa)$-bipartite representation of
 	$\Gamma_{Mx=c}$ with respect to $\ket{\psi}$, where 
 	\begin{equation*} 
 	\tau = 2 \max(\tR,4) \sqrt{\tM (\eps + 2^{\tR-1} \delta)} + \binom{\max(\tR,4)}{2}
 	2^{\tR+1} \delta 
 	\end{equation*}
 	and $\kappa = 2^{\tR+1} \delta$.
 \end{cor}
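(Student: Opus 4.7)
The plan is to directly verify the three conditions (i), (ii), (iii) of Definition \ref{D:bipartiterep} for the pair $(\Phi, \Phi')$ on the canonical generators and relations of the solution group $\Gamma_{Mx=c}$ from Definition \ref{def::solutiongroup}. Everything will be a translation of Proposition \ref{prop::approxstrategy} and Equations \eqref{E:observables1}--\eqref{E:observables3}, combined with a careful accounting of constants.

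For condition (i), I would split the relations of $\Gamma_{Mx=c}$ into three classes. The relations $x_j^2 = J^2 = 1$ and $[x_j, J] = 1$ are mapped under $\Phi$ to exact identities, since $B_j$ is a $\pm 1$-valued observable (so $B_j^2 = \Id$) and $\Phi(J) = -\Id$ is central. The linear relations $\prod_j x_j^{M_{ij}} = J^{c_i}$ are handled by Proposition \ref{prop::approxstrategy}(b), giving error $2\tR\sqrt{\tM(\eps + 2^{\tR-1}\delta)} + \binom{\tR}{2} 2^{\tR+1}\delta$. The local commutation relations $[x_j, x_k] = 1$ for $j, k \in V_i$ are handled by Proposition \ref{prop::approxstrategy}(c), giving error $8\sqrt{\tM(\eps + 2^{\tR-1}\delta)} + 6 \cdot 2^{\tR+1}\delta$. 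Replacing $\tR$ by $\max(\tR,4)$ in the first bound and noting $\binom{\max(\tR,4)}{2} \geq 6$ makes both error bounds at most $\tau$ simultaneously.

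For condition (ii), I would argue one generator at a time. For $x_j$ with $j \in V_i$: since $B_j$ is a self-adjoint unitary, $\Phi(x_j)^{-1}\ket{\psi} = B_j\ket{\psi}$, and by Proposition \ref{prop::approxstrategy}(a), $B_j\ket{\psi} \approx_{2\sqrt{\tM(\eps + 2^{\tR-1}\delta)}} A_{ij}\ket{\psi} = \Phi'(x_j)\ket{\psi}$; this $2\sqrt{\tM(\eps+2^{\tR-1}\delta)}$ error is bounded by $\tau$. For $J$: $\Phi(J)^{-1}\ket{\psi} = -\ket{\psi} = \Phi'(J)\ket{\psi}$ exactly. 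For condition (iii), I would use Equation \eqref{E:observables3}: for generators $x_j, x_k$ with $k \in V_i$, $\norm{[\Phi(x_j), \Phi'(x_k)] - \Id} = \norm{[B_j, A_{ik}] - \Id}$, which equals $\norm{[A_{ik}, B_j] - \Id}$ (since $[A,B]^{-1} = [B,A]$ and $\norm{U^{-1} - \Id} = \norm{U - \Id}$ for unitaries) and is thus bounded by $2^{|V_i|+1}\delta \leq 2^{\tR+1}\delta = \kappa$. Commutators involving $\Phi(J)$, $\Phi'(J)$, or any $\Phi'(x_k)$ with $k$ not in the support of any equation reduce to commutators with $\pm\Id$, and hence are exactly $\Id$.

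The only genuine obstacle is the bookkeeping, i.e.\ ensuring that the maxima of the error expressions arising from (i) and (ii) collapse into the single expression $\tau$ and that the $2^{\tR+1}\delta$ commutator bound really is the only source of operator-norm error (condition (iii)); the replacement of $\tR$ by $\max(\tR,4)$ in $\tau$ is precisely what is needed to simultaneously absorb the $8$ and $6$ constants coming from Proposition \ref{prop::approxstrategy}(c). With these bounds in hand, every item of Definition \ref{D:bipartiterep} is verified, completing the proof.
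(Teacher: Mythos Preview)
Your proof is correct and follows exactly the approach the paper indicates: the paper's proof is the single sentence ``Follows immediately from Proposition \ref{prop::approxstrategy}, Equation \eqref{E:observables3}, and the fact that $B_j^2=\Id$,'' and your argument is precisely a careful unpacking of that sentence, invoking parts (a), (b), (c) of the Proposition for conditions (i) and (ii), Equation \eqref{E:observables3} for condition (iii), and $B_j^2=\Id$ for the involutive relations. Your observation that replacing $\tR$ by $\max(\tR,4)$ absorbs the constants $8$ and $6$ from part (c) is exactly the bookkeeping the paper suppresses.
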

 \begin{proof}
 	Follows immediately from Proposition \ref{prop::approxstrategy}, 
 	Equation \eqref{E:observables3}, and the fact that $B_j^2=\Id$. 
 \end{proof}
 
 \section{Embedding finitely-presented groups in solution groups}
 
 By Theorem \ref{thm::SBR}, every recursive language can be efficiently encoded
 as the word problem of a finitely-presented group. By Lemma
 \ref{lem::HNNtrick}, the word problem for finitely-presented groups reduces to
 the problem of determining whether $J_G=1$ in finitely-presented groups $G$
 over $\Z_2$. By Theorem \ref{thm::perfectstrats}, if $G = \Gamma_{Mx=c}$ is
 a solution group, then $J_G=1$ if and only if $\omega^{co}(\mcG_{Mx=c})=1$.
 
 The main result of \cite{Sl16} is that the problem of determining whether
 $J_G=1$ for general finitely-presented groups $G$ over $\Z_2$ reduces to the
 problem of determining whether $J_{\Gamma}=1$ for solution groups $\Gamma =
 \Gamma_{Mx=c}$. In this paper, we use the following version of this result:
 \begin{thm}[\cite{Sl16}]\label{thm::solutionembedding}
 	Let $G = \langle S : R \rangle$ be a finitely presented group over $\Z_2$,
 	such that $J_G \in S$, and let $N = |S| + \sum_{r \in R} |r|$ be the size
 	of the presentation. Then there is an $m \times n$ linear system $Mx=c$
 	and a map $\phi : \mcF(S) \arr \mcF(x_1,\ldots,x_n,J)$ such that
 	\begin{enumerate}[(a)]
 		\item $\phi(J_G) = J_{\Gamma}$, and $\phi$ descends to an injection $G
 		\arr \Gamma_{Mx=c}$ (in other words, for all $w \in \mcF(S)$,
 		$\phi(w)$ is trivial in $\Gamma_{Mx=c}$ if and only if $w$ is
 		trivial in $G$);
 		\item for all $w \in \mcF(S)$, $|\phi(w)| \leq 4 |w|$, and if
 		$w$ is trivial in $G$, then $\Ar_{\Gamma}(\phi(w)) = O(N \cdot
 		\Ar_{G}(w))$; and 
 		\item $M$ has exactly three non-zero entries in every row, 
 		the dimensions $m$ and $n$ of $M$ are $O(N)$, and $M$ and $b$ can be constructed from
 		$\langle S : R \rangle$ in time polynomial in $N$. 
 	\end{enumerate}
 \end{thm}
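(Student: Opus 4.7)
The plan is to reconstruct the construction of \cite{Sl16}. Given a finite presentation $\langle S : R \rangle$ of $G$ over $\Z_2$, I will build a linear system $Mx=c$ and a word-level map $\phi : \mcF(S) \arr \mcF(x_1,\ldots,x_n,J)$ by proceeding in two stages: generator encoding and relation encoding. For the first stage, for each $s \in S \setminus \{J_G\}$ I introduce two fresh solution-group generators $x_{s,0}, x_{s,1}$ (each an involution by the solution-group axioms) and set $\phi(s) = x_{s,0} x_{s,1}$, expressing $s$ as a product of two involutions; I set $\phi(J_G) = J$. This immediately gives $|\phi(s)| \leq 2$ for each generator letter, and extending multiplicatively gives $|\phi(w)| \leq 4|w|$ after accounting for the inverses introduced in the relation encoding below.

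The core step is to encode each relation $r = s_{i_1}^{a_1} \cdots s_{i_k}^{a_k} \in R$ as a sub-system of $Mx=c$ whose rows each contain exactly three non-zero entries. I would introduce auxiliary ``running-product'' involutions $y_0, y_1, \ldots, y_k$, together with three-term equations imposing $y_{j-1}\, \phi(s_{i_j}^{a_j})\, y_j = 1$, and boundary conditions $y_0 = 1$ and $y_k = J^{c_r}$, where $c_r$ is chosen so that the telescoping of the equations yields $\phi(r) = 1$. The main obstruction is that the solution-group axioms force any two generators co-occurring in a single equation to commute, which threatens to impose spurious relations on $G$. The technical heart of \cite{Sl16} is a family of gadgets, inspired by the Pauli group and the Mermin--Peres magic square, which realise an arbitrary commutation pattern while respecting the three-term, locally-commutative constraints. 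I would insert these gadgets around each step $y_{j-1}\, \phi(s_{i_j}^{a_j})\, y_j = 1$ to prevent unintended commutators between distinct occurrences of generators from leaking through.

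Finally I would verify the three claimed properties. Injectivity of $\phi$ on $G$ would be established by constructing a retraction $\rho : \Gamma_{Mx=c} \arr G$ sending each $y_j$ to the partial product $s_{i_1}^{a_1} \cdots s_{i_j}^{a_j}$ and each pair $x_{s,0}, x_{s,1}$ to a chosen decomposition of $s$ (in a suitable overgroup, collapsed back to $G$), then checking row-by-row that every linear-system relation and every magic-square gadget relation is satisfied under $\rho$; composing $\rho$ with $\phi$ gives the identity on $G$, so $\phi$ is injective. The length bound is immediate from the first stage. The area bound $\Ar_\Gamma(\phi(w)) = O(N \cdot \Ar_G(w))$ follows because each use of a relation $r \in R$ in a van Kampen derivation of $w$ can be simulated by $O(|r|) = O(N)$ applications of the three-term linear relations (one per running-product step plus $O(1)$ per gadget); thus a derivation of area $k$ in $G$ lifts to a derivation of area $O(Nk)$ in $\Gamma$. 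The dimension bounds $m, n = O(N)$, the exact-three-sparsity of $M$, and polynomial-time constructibility are read off directly from the gadget counts. The hardest part is undoubtedly the second stage: the subtle interplay between the forced commutativity of co-occurring variables and the desired commutation structure of $G$ is precisely what the combinatorial gadgets of \cite{Sl16} are designed to control, and any full proof would stand or fall on their detailed verification.
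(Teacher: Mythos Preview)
Your two-stage skeleton matches the paper's, but both stages diverge in the details, and stage two has a genuine gap. In stage one the paper sets $\phi_1(s) = u_s v_s u_s v_s$, a product of \emph{four} involutions, which gives $|\phi(w)| \leq 4|w|$ directly; your length-$2$ encoding would already yield $|\phi(w)| \leq 2|w|$, so the phrase ``after accounting for inverses'' signals confused bookkeeping (inverses of involutions add no length).

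The real gap is in stage two. The construction of \cite{Sl16} used here is not ``running products plus Mermin--Peres commutation gadgets'' but the \emph{wagon wheel} hypergraph: for each relation $r = J^p s_1 \cdots s_n$ of the involutive presentation $G'$, one attaches an outer cycle carrying the $s_i$ interleaved with auxiliary edges $a_i, b_i$, an inner cycle of edges $d_i$, and spokes $c_i$, with every vertex trivalent (Figure~\ref{fig::wagonwheel}). The area bound $\Ar_\Gamma(\phi(r)) \leq 3|r|$ holds because this hypergraph is itself a van Kampen diagram for $\phi(r)$ with $3|r|$ faces. Your proposed equations $y_{j-1}\,\phi(s_{i_j}^{a_j})\,y_j = 1$ have four variables once your length-$2$ encoding of $\phi(s)$ is substituted, so the three-nonzero-entries-per-row requirement already fails as written. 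And the retraction argument for injectivity does not work: the generators $x_{s,i}$ are involutions in $\Gamma$, but $G$ need not contain any involutions besides $J_G$, so there is in general no homomorphism $\Gamma \to G$ splitting $\phi$; the paper instead invokes structural results specific to the wagon wheel (\cite[Proposition~4.3 and Theorem~5.1]{Sl16}). The Mermin--Peres square motivates linear system games historically but is not the gadget used in this embedding.
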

 Note that if $G = \langle S : R\rangle$ and $G' = \langle S' : R' \rangle$ are
 finitely-presented groups, and $\phi : \mcF(S) \arr \mcF(S')$ is a homomorphism
 which descends to a homomorphism $G \arr G'$, then $\Ar_{G'}(\phi(w)) =
 O(\Ar_G(w))$, with a constant which depends on $G$, $G'$, and $\phi$. The
 statement in part (b) of Theorem \ref{thm::solutionembedding} is stronger, in
 that the constant is independent of $G$ (so the only dependence on $G$ comes
 from $N$). 
 
 \begin{proof}[Proof of Theorem \ref{thm::solutionembedding}]
 	Part (a) is Theorem 3.1 of \cite{Sl16}. For the complexity statements in
 	parts (b) and (c), we need to analyze the construction of $M$ and $b$,
 	which occurs in Proposition 4.3, Corollary 4.8, and Theorem 5.1 of
 	\cite{Sl16}. For this purpose, suppose that $G = \langle S : R \rangle$ is
 	a finitely presented group over $\Z_2$. For simplicity, we assume that $J_G
 	= J \in S$, and that all relations containing $J$ are of the form $J\cdot r = 1$
 	for some word $r \in \mcF(S \setminus \{J\})$. This assumption can always
 	be satisfied by adding an extra generator.
 	
 	For the first step of the construction, we also need some notation.  If $x
 	\in \mcF(S')$ is equal to the reduced word $s_1^{a_1} \cdots s_k^{a_k}$,
 	where $s_i \in S'$ and $a_i \in \{\pm 1\}$ for all $1 \leq i \leq k$, let
 	$x^+ = s_1 \cdots s_k$. Note that this word is still reduced, and that $x$
 	and $x^+$ represent the same element in the group 
 	\begin{equation*}
 	\langle S' : s^2 = 1 \text{ for all } s \in S'\rangle.
 	\end{equation*} 
 	Now, starting from $G = \langle S : R \rangle$, we take a new set of
 	indeterminates $S' = \{u_s,v_s : s \in S \setminus \{J\}\}$, and define
 	$\phi_1 : \mcF(S) \arr \mcF(S'\cup\{J\})$ by $\phi_1(s) = u_s v_s u_s v_s$
 	for all $s \in S \setminus \{J\}$ and $\phi_1(J) = J$.  We then let 
 	\begin{equation*}
 	G' = \langle S'\cup\{J\}: R' \cup \{u_s^2 = v_s^2 = 1 : s \in S \setminus \{J\}\} \cup \{J^2 = 1\}
 	\rangle,
 	\end{equation*}
 	where $R' = \{\phi_1(r)^+ : r \in R\}$. Since $u_s^2 = v_s^2 = J^2 = 1$ in
 	$G'$, we conclude that $\phi_1$ descends to a homomorphism $\phi_1 : G \arr
 	G'$. It is not hard to see that this morphism is injective (see, for instance,
 	\cite[Proposition 4.3]{Sl16}), and clearly $|\phi_1(w)| \leq 4 |w|$. If $r
 	\in R$, then $\phi_1(r)$ can be turned into $\phi_1(r)^+$ in at most $4|r|$
 	applications of the relations $u_s^2=v_s^2=1$, $s \in S \setminus \{J\}$.
 	(In particular, $\Ar_{G'}(\phi_1(w)) \leq 4 N \Ar_{G}(w)$, although we use
 	a more refined calculation for bound on $\Ar_{\Gamma}$ in part (b).) The
 	size of the presentation of $G'$ is 
 	\begin{equation*}
 	N' = |S'| + 1 + \sum_{r \in R'} |r| + 4|S|-2 \leq 6|S| + 4 \sum_{r \in R} |r|
 	\leq 6 N,
 	\end{equation*}
 	and the presentation can be constructed from $\langle S : R \rangle$ in
 	$O(N)$ time.
 	
 	To finish the construction of $Mx=c$, we apply the wagon wheel construction
 	from Section 5 of \cite{Sl16} to the group $G'$. This construction is best
 	understood pictorially. An $m \times n$ matrix $M$ with entries in $\Z_2$
 	can be represented graphically by drawing a hypergraph with a vertex for
 	each row of $M$, and an edge for each column, such that the $j$th hyperedge
 	is incident to the $i$th vertex if and only if $M_{ij}=1$. With this
 	representation, a vector $b \in \Z_2^m$ is the same as function from the
 	vertices of the hypergraph to $\Z_2$. So a linear system $Mx=c$ can thus be
 	represented by a hypergraph with a (not necessarily proper) $\Z_2$-vertex
 	colouring, where the edges correspond to the variables of the system, and
 	the vertices to the equations. 
 	
 	In the wagon wheel construction, $Mx=c$ is defined as a union of subsystems
 	$M^r x^r = c^r$, each corresponding to a relation $r \in R'$. The variables
 	of $Mx=c$ consist of the indeterminates $S'$, as well as an additional set
 	of ancillary variables $S''$. Each ancillary variable appears in exactly
 	one of the subsystems $M^r x^r = c^r$, while the variables $S'$ are shared.
 	If $r = J^p s_1 \cdots s_n$, where $p \in \Z_2$ and $s_1,\ldots,s_n \in
 	S'$, then the portion of the hypergraph of $Mx=c$ corresponding to $M^r x^r = c^r$
 	is shown in Figure \ref{fig::wagonwheel}, with the ancillary variables denoted
 	by $a_i,b_i,c_i,d_i$, $1 \leq i \leq n$. The vertex colouring is also shown in
 	Figure \ref{fig::wagonwheel}: one vertex is given colour $p$, and the remaining
 	vertices are coloured $0$.
 	
 	\begin{figure}
 		\begin{tikzpicture}[auto,ultra thick,scale=.6,emptynode/.style={inner sep=0},
    every node/.style={scale=.8},
    vertex/.style={circle,draw,thin,inner sep=2.5,scale=.55,fill=white},
    helabel/.style={fill=white,scale=.8}]
    \draw (-10:3) arc [start angle=-10,end angle=150,radius=3];
    \draw[dashed] (150:3) arc [start angle=150,end angle=350,radius=3];
    \node[vertex] (0) at (-10:3) {$0$};
    \node[vertex] (1) at (30:3) {$0$};
    \node[vertex] (2) at (70:3) {$0$};
    \node[vertex] (3) at (110:3) {$0$};
    \node[vertex] (4) at (150:3) {$0$};
    \path (0) arc[start angle=-10,end angle=30,radius=3] node[swap,pos=.35] {$d_3$};
    \path (1) arc[start angle=30,end angle=70,radius=3] node[swap,pos=.5] {$d_2$};
    \path (2) arc[start angle=70,end angle=110,radius=3] node[swap,pos=.5] {$d_1$};
    \path (3) arc[start angle=110,end angle=150,radius=3] node[swap,pos=.5] {$d_{n}$};

    \draw (-10:6) arc [start angle=-10,end angle=150,radius=6];
    \draw[dashed] (150:6) arc [start angle=150,end angle=350,radius=6];
    \node[vertex] (5) at (30:6) {$0$}
        edge node[swap,pos=.3] {$c_{2}$} (1);
    \node[vertex] (6) at (50:6) {$0$};
    \node[vertex] (7) at (70:6) {$0$}
        edge node[swap] {$c_{1}$} (2);
    \node[vertex] (8) at (90:6) {$p$};
    \node[vertex] (9) at (110:6) {$0$}
        edge node[swap,pos=.3] {$c_{n}$} (3);
    \node[vertex] (10) at (130:6) {$0$};
    \node[vertex] (11) at (150:6) {$0$}
        edge node[swap] {$c_{n-1}$} (4);
    \node[vertex] (12) at (10:6) {$0$};
    \node[vertex] (13) at (-10:6) {$0$}
        edge node[swap] {$c_3$} (0);
    \path (5) arc[start angle=30, end angle=50,radius=6] node[pos=.4,swap] {$b_2$};
    \path (6) arc[start angle=50, end angle=70,radius=6] node[pos=.58,swap] {$a_2$};
    \path (7) arc[start angle=70, end angle=90,radius=6] node[pos=.58,swap] {$b_1$};
    \path (8) arc[start angle=90, end angle=110,radius=6] node[pos=.32,swap] {$a_1$};
    \path (9) arc[start angle=110, end angle=130,radius=6] node[pos=.35,swap] {$b_{n}$};
    \path (10) arc[start angle=130, end angle=150,radius=6] node[pos=.5,swap] {$a_{n}$};
    \path (13) arc[start angle=-10,end angle=10,radius=6] node[pos=.5,swap] {$b_{3}$};
    \path (12) arc[start angle=10,end angle=30,radius=6] node[pos=.5,swap] {$a_{3}$};

    \path[fill,pattern=north west lines] (12.20) to ++(40:2) to ($(12.0)+(-20:2)$) 
        to (12.0) to [bend right] (12.20);
    \draw[thin] (12.20) to ++(40:2);
    \draw[thin] (12.0) to node[helabel,pos=.35,yshift=4] {$s_{3}$} ++(-20:2);

    \path[fill,pattern=north west lines] (6.60) to ++(80:2) to ($(6.40)+(20:2)$) 
        to (6.40) to [bend right] (6.60);
    \draw[thin] (6.60) to ++(80:2);
    \draw[thin] (6.40) to node[helabel,pos=.55,yshift=2] {$s_{2}$} ++(20:2);

    \path[fill,pattern=north west lines] (8.100) to ++(120:2) to ($(8.80)+(60:2)$) 
        to (8.80) to [bend right] (8.100);
    \draw[thin] (8.100) to ++(120:2);
    \draw[thin] (8.80) to node[helabel,pos=.5,xshift=-5] {$s_{1}$} ++(60:2);

    \path[fill,pattern=north west lines] (10.140) to ++(160:2) to ($(10.120)+(100:2)$) 
        to (10.120) to [bend right] (10.140);
    \draw[thin] (10.140) to ++(160:2);
    \draw[thin] (10.120) to node[helabel,pos=.5,xshift=-5] {$s_{n}$} ++(100:2);
\end{tikzpicture}
 		\caption{Pictorial depiction of the linear system associated to each
 			relation in the wagon wheel embedding as described in the proof of Theorem \ref{thm::solutionembedding}. Figure reproduced from \cite[Figure 2]{Sl16}.}
 		\label{fig::wagonwheel}
 	\end{figure}
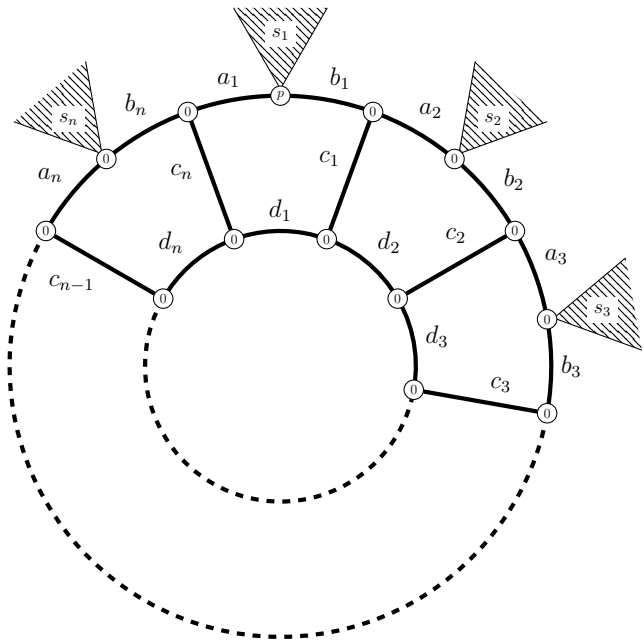
 	
 	As can be seen from Figure \ref{fig::wagonwheel}, 
 	the number of ancillary variables added for subsystem $M^r x^r = c^r$ is
 	$4|r|$, and the number of equations added is $3|r|$. Since every vertex in
 	the hypergraph has degree three, every row of $M^r$ has exactly three
 	non-zero entries.
 	Theorem 5.1 of \cite{Sl16} then states that the natural inclusion $\phi_2 :
 	\mcF(S' \cup \{J\}) \arr \mcF(S' \cup \{J\} \cup S'') : s \mapsto s$
 	descends to an injection $G' \arr \Gamma_{Mx=c}$. 
 	
 	Recall from Definition \ref{def::solutiongroup} that every linear equation
 	in $Mx=c$ becomes a defining relation of $\Gamma := \Gamma_{Mx=c}$. The
 	wagon wheel construction is designed so that if $r \in R'$, then
 	$\phi_2(r)$ can be turned into the identity by applying each defining
    relation from $M_r x = c_r$ exactly once, so $\Ar_{\Gamma}(\phi_2(r)) \leq
    3|r|$ for all $r \in R'$. This is easiest to see using pictures of the
    group, for which we refer to Section 7 of \cite{Sl16}; with this formalism,
    Figure \ref{fig::wagonwheel} is itself a proof that $\phi_2(r)=1$, with
    each vertex corresponding to a use of the corresponding relation. For
    relations $r = J^p s_1 \cdots s_n$ with $p \neq 1$, we start with the 
    relation coloured by $p$, after which $J$ no longer appears in the word.
    If $r \in R$, then $\phi_2(\phi_1(r))$ can be turned
 	into the identity with at most $7|r|$ applications of the relations of
 	$\Gamma$, by first changing $\phi_2(\phi_1(r))$ to $\phi_2(\phi_1(r)^+)$
 	using the relations $s^2=1$, $s \in S'$, and then applying the linear
 	relations of $\Gamma$. It follows that $\Ar_{\Gamma}(\phi_2(w)) = O(N
 	\Ar_{G}(w))$ for all $w \in \mcF(S)$ which are trivial in
 	$G$. It should also be clear from Figure \ref{fig::wagonwheel} that $M^r
 	x^r = c^r$ can be constructed in time polynomial in $|r|$. We conclude that
 	$Mx=c$ is an $m \times n$ linear system with $m$ and $n$ equal to $O(N)$,
 	and that $M$ and $b$ can be constructed in time polynomial in $N$, so the
 	theorem holds with $\phi = \phi_2 \circ \phi_1$.
 \end{proof}
 
 Theorem \ref{thm::solutionembedding} is sufficient to prove Theorem \ref{T:main_game}.
 However, to get perfect zero-knowledge protocols for $\MIP^{co}_{\delta}$, we 
 need to prove an additional fact about the embedding in Theorem
 \ref{thm::solutionembedding}.
 
 \begin{lemma}\label{L:pzkwagonwheel}
 	Let $Mx=c$ be an $m \times n$ linear system from the wagon wheel
 	construction in the proof of Theorem \ref{thm::solutionembedding}. In the
 	solution group $\Gamma_{Mx=c}$, the generator $x_i$ is not equal to $1$ or
 	$J$ for all $1 \leq i \leq n$, and similarly the product $x_i x_j$ is not
 	equal to $1$ or $J$ for all $1 \leq i \neq j \leq n$.
 \end{lemma}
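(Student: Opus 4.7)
The plan is to establish each required inequality by producing a homomorphism from $\Gamma_{Mx=c}$ whose image separates the target element from $1$ and $J$. I would combine two tools: (i) $\Z_2$-valued characters of $\Gamma_{Mx=c}$, obtained via a local analysis of each wheel together with the injection $G' \incl \Gamma_{Mx=c}$ from Theorem \ref{thm::solutionembedding}, and (ii) when necessary, the explicit Pauli-type representation underlying the wagon-wheel commuting-operator strategy of \cite{Sl16}.

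For (i), I would analyze each wheel locally. Fix the wheel for a relation $r = J^p s_1 \cdots s_{n_r} \in R'$ and parameterize the $4 n_r$ ancillary variables by the inner-ring values $\alpha(d_1), \ldots, \alpha(d_{n_r})$ and one outer starting value such as $\alpha(a_1)$. Propagating around the wheel using the three-term equations shows that every remaining ancillary variable's $\Z_2$-value is an explicit affine function of these $n_r + 1$ free parameters together with $\alpha(J)$ and $\alpha(s_1), \ldots, \alpha(s_{n_r})$, while the sum of all $3 n_r$ wheel equations collapses to the single consistency condition $\sum_k \alpha(s_k) = p\,\alpha(J)$, matching the abelianized form of $r$ in $G'$. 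Hence homomorphisms $\alpha : \Gamma_{Mx=c} \arr \Z_2$ are in natural bijection with pairs $(\bar\alpha, \xi)$ where $\bar\alpha \in \mathrm{Hom}(G'^{\mathrm{ab}}, \Z_2)$ and $\xi$ specifies the $n_r + 1$ free ancillary parameters in each wheel; in particular each individual ancillary variable can be independently toggled in $\Z_2$ for any fixed $\bar\alpha$.

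This, combined with the fact that $G'^{\mathrm{ab}}$ is $\Z_2$-freely generated by $\{u_s, v_s\}_{s \in S \setminus \{J\}}$ and $J$ modulo at most the single constraint $\alpha(J) = 0$ (which arises only when some $r \in R$ is $J$-linear, because $\phi_1(s) = u_s v_s u_s v_s$ has even degree in each of $u_s, v_s$), yields $\Z_2$-characters witnessing $x_i \neq 1, J$ for each generator and $x_i x_j \neq 1, J$ for each pair, \emph{except} in the ``rigid'' cases where abelianization forces the target into $\{0, \alpha(J)\}$ for every valid $\alpha$. These rigid pairs can be read off directly from the parameterization: for instance, pairs of the form $(a_i, b_i)$ collapse to $\alpha(J) + \alpha(s_i)$, which is not toggleable via the wheel's local parameters.

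For these rigid cases I would invoke (ii). When $J \neq 1$ in $\Gamma_{Mx=c}$, Theorem \ref{thm::perfectstrats} together with the explicit assignment of Pauli-type observables in \cite{Sl16} produces a unitary representation $\pi : \Gamma_{Mx=c} \arr \mcU(\mcH)$ with $\pi(J) = -\Id$ in which each generator acts as a non-scalar $\pm 1$-valued observable (via Equation \eqref{E:observables4}); this immediately gives $\pi(x_i) \neq \pm \Id$, and reduces the pairwise claim to confirming that no two distinct edges of the hypergraph are assigned proportional observables. The main obstacle is this final verification, which splits into three subcases --- two ancillary edges in a common wheel, two edges in different wheels, and an ancillary edge paired with a shared spoke $s \in S'$ --- and is handled by a finite case analysis over the adjacency types of Figure \ref{fig::wagonwheel}; edges from different wheels act on independent tensor factors and are automatic, while the remaining cases are routine bookkeeping against the explicit strategy of \cite{Sl16}. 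When instead $J = 1$ in $\Gamma_{Mx=c}$, the lemma reduces to the statements $x_i \neq 1$ and $x_i \neq x_j$, both of which are already covered by the $\Z_2$-character analysis above.
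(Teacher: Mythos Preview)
Your approach (i) is essentially the paper's approach, but you abandon it prematurely. The paper works entirely with one-dimensional representations and never needs your approach (ii).

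The key simplification you miss is to pass immediately to $\Gamma_0 := \Gamma_{Mx=c}/\langle J\rangle$: showing $x_i \neq 1$ and $x_i \neq x_j$ in $\Gamma_0$ is exactly equivalent to the lemma, since the kernel of $\Gamma_{Mx=c} \twoheadrightarrow \Gamma_0$ is $\{1,J\}$. Characters of $\Gamma_0$ are precisely classical solutions of $Mx=0$, which are $\Z_2$-cycles in the hypergraph of Figure~\ref{fig::wagonwheel}. This eliminates the variable $\alpha(J)$ from your bookkeeping and removes any need to split on whether $J=1$.

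For the pair $\{a_i,b_i\}$ that you flag as ``rigid'': your own parameterization already gives $\alpha(a_ib_i)=\alpha(s_i)$ in $\Gamma_0$, and you correctly note that each $s_i\in S'=\{u_s,v_s\}$ is free in $G_0'^{\mathrm{ab}}$ because $\phi_1(r)^+$ has even degree in every $u_s,v_s$. So $\alpha(s_i)$ can be set to $1$, separating $a_i$ from $b_i$ by a character --- the case is not rigid at all. The paper makes this explicit by constructing, for each fixed $s\in S'$, a $\Z_2$-cycle $C$ consisting of outer-ring paths joining consecutive occurrences of $s$ in each wheel; by construction exactly one of $a_{i_j},b_{i_j}$ lies in $C$ at every occurrence of $s$.

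Your approach (ii) is therefore unnecessary, and as sketched it is also incomplete: the claim that ``edges from different wheels act on independent tensor factors'' is false, since the spokes $s\in S'$ are shared across wheels and any operator assignment must be consistent with all of them; and the remaining case analysis against an explicit strategy from \cite{Sl16} is not carried out. More seriously, your case split leaves a genuine gap: when $J=1$ you defer the $\{a_i,b_i\}$ pair back to the character analysis that you had just declared insufficient for it.
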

 \begin{proof}
 	We revisit the wagon wheel construction in the proof of Theorem
 	\ref{thm::solutionembedding}. We need to show that $x_i \neq 1$ and $x_i \neq x_j$
 	in $\Gamma_0 := \Gamma_{Mx=c} / \langle J \rangle$ for all $1 \leq i \neq j
 	\leq n$. This is the same as showing that $x_i \neq 1$
 	and $x_i \neq x_j$ in $\Gamma_{Mx=0} = \Gamma_0 \times \Z_2$.
 	Recall that the generators of $\Gamma_{Mx=0}$ are split into two sets, the
 	generators $S'$ of $G'$, and the ancillary variables $S''$. The group $G'_0
 	:= G' / \langle J \rangle$ has a presentation where every generator $s \in
 	S'$ occurs an even number of times in every relation. Thus for any $s\in S'$,
 	we can define a representation $G' \arr \C^{x}$ by sending $s \in S'$ to
 	$-1$, and $t \in S' \setminus \{s\}$ to $1$. It follows that $s \neq 1$
 	and $s \neq t$ in $G'_0$ for every $s \neq t \in S'$. Since $G_0' \arr \Gamma_0$
 	is an injection, we conclude that the same holds in $\Gamma_0$.
 	
 	For the ancillary variables, consider the hypergraph description of the
 	system $Mx=0$. Given a subset of edges $C$, let $y \in \Z_2^n$ be the
 	vector with $y_i = 1$ if and only if the $i$th edge is in $C$. Then $y$ is
 	a classical solution to $Mx=0$ if and only if every vertex of the
 	hypergraph is incident with an even number of edges from $C$. The classical
 	solutions of $Mx=0$ correspond to $1$-dimensional representations of
 	$\Gamma_0$; if $y$ is a solution of $Mx=0$, then the corresponding
 	$1$-dimensional representation of $\Gamma_0$ sends $x_i \mapsto (-1)^{y_i}$. 
 	
 	Inspecting the wagon wheel hypergraph in Figure \ref{fig::wagonwheel}, we
 	see that every ancilla variable $s \in S''$ belongs to a cycle $C$ which
 	does not contain any edges from $S'$. Using the corresponding
 	representation of $\Gamma_0$, we see that $s\neq 1$ and $s \neq t$ in
 	$\Gamma_0$ for all $s \in S''$ and $t \in S'$. Similarly, if $s \neq t \in
 	S''$, and $\{s,t\}$ is not one of the pairs $\{a_i,b_i\}$, then there is a
 	cycle $C$ containing $s$ and not containing $t$, so $s \neq t$ in $\Gamma_0$.
 	
 	For the pairs $\{a_i,b_i\}$, fix $s \in S''$, and recall that if $r = s_1
 	\cdots s_n$ is a relation of $G'$, where $s_1,\ldots,s_n \in S'$, then
 	$s$ occurs an even number of times in $r$. Let $1 \leq i_1 < \cdots <
 	i_{2k} \leq n$ be the indices such that $s_{i_j} = s$, and let
 	\begin{align*}
 	C_r := \{& s_{i_1},b_{i_1},a_{i_1+1},b_{i_1+1},\ldots,a_{i_2},s_{i_2},\\
 	& s_{i_3},b_{i_3},a_{i_3+1},b_{i_3+1},\ldots,a_{i_4},s_{i_4},\\
 	& \ldots,\\
 	& s_{i_{2k-1}},b_{i_{2k-1}},a_{i_{2k-1}+1},b_{i_{2k-1}+1},\ldots,
 	a_{i_{2k}},s_{i_{2k}} \}.
 	\end{align*}
 	be the collection of paths along the outer cycle of the wagon wheel
 	connection $s_{i_1}$ with $s_{i_2}$, $s_{i_3}$ with $s_{i_4}$, and so on.
 	Let $C := \bigcup_{r \in R'} C_r$. Then every vertex of the hypergraph of
 	$Mx=0$ is incident to an even number of edges in $C$.  If we look at a
 	particular relation $r$, then for every $1 \leq j \leq 2k$, exactly one of
 	the edges $a_{i_j},b_{i_j}$ belongs to $C_r$, so $a_{i_j} \neq b_{i_j}$ in $\Gamma_0$. 
 	It follows that all of the pairs of ancillary generators $a_i,b_i$ are
 	distinct in $\Gamma_0$.
 \end{proof} 
 
 \begin{prop}\label{P:pzkwagonwheel}
 	Let $Mx=c$ be an $m \times n$ linear system from the wagon wheel
 	construction in the proof of Theorem \ref{thm::solutionembedding}, 
 	and suppose $J \neq 1$ in $\Gamma_{Mx=c}$. Then $\mcG_{Mx=c}$ has
 	a commuting-operator strategy $\mcS = (\mcH,\{P^i_a\}_{a \in \mcO^i_a},
 	\{Q^j_b\}_{b \in \Z_2}, \ket{\psi})$ such that $\omega(\mcG_{Mx=c};\mcS) =
 	1$, and 
 	\begin{equation*}
 	\bra{\psi} P^i_a Q^j_b \ket{\psi} = \begin{cases} 
 	\frac{1 + (-1)^{a_j+b}}{8} & j \in V_i \\
 	\frac{1}{8} & j \not\in V_i
 	\end{cases}
 	\end{equation*}
 	for all $1 \leq i \leq m$, $a \in \mcO_A^i$, $1 \leq j \leq m$, $b \in \Z_2$.
 \end{prop}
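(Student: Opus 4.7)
My plan is to realize the commuting-operator strategy using the left and right regular representations of $\Gamma := \Gamma_{Mx=c}$, restricted to the $(-1)$-eigenspace of the central involution $J$. Let $\pi_- := \tfrac{1}{2}(\Id - L_J)$ and $\mcH := \pi_-(\ell^2 \Gamma)$. Since $J$ is central, $L_J$ commutes with every $L_g$ and $R_g$, so restricting the left and right regular representations to $\mcH$ gives two commuting representations of $\Gamma$ satisfying $L_J|_\mcH = R_J|_\mcH = -\Id$. The hypothesis $J \neq 1$ in $\Gamma$ guarantees $\ket{1}$ and $\ket{J}$ are distinct, so $\pi_-\ket{1} = \tfrac{1}{\sqrt 2}(\ket{1} - \ket{J})$ is a unit vector, which I take as $\ket{\psi}$. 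Setting $A_{ij} := L_{x_j}|_\mcH$ for $j \in V_i$ and $B_j := R_{x_j}|_\mcH$, and recovering $P^i_a$ and $Q^j_b$ from \eqref{E:observables4}, the defining relations of $\Gamma$ together with $L_J = -\Id$ on $\mcH$ force \eqref{E:observables1}--\eqref{E:observables2}, so $\mcS$ is a valid commuting-operator strategy; and $x_j = x_j^{-1}$ gives $A_{ij}\ket\psi = B_j\ket\psi$, so $\omega(\mcG_{Mx=c}; \mcS) = 1$.

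The correlation computation then rests on the single identity $\bra{\psi} L_w R_v \ket{\psi} = \delta_{w=v} - \delta_{w=Jv}$ (both tested in $\Gamma$), which follows by unfolding the definition of $\ket\psi$. Expanding \eqref{E:observables4} yields
\[
\bra{\psi} P^i_a Q^j_b \ket{\psi}
= \frac{1}{16} \sum_{S \subseteq V_i}\sum_{\eps \in \{0,1\}} (-1)^{\sum_{k\in S}a_k + \eps b}\, \bra{\psi} L_{w_S} R_{x_j^\eps} \ket{\psi},
\]
with $w_S := \prod_{k\in S}x_k$ (well-defined because the $x_k$ for $k \in V_i$ commute in $\Gamma$). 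I then classify which subsets $S$ contribute by asking whether $w_S \in \{1, J, x_j, Jx_j\}$ in $\Gamma$. Lemma \ref{L:pzkwagonwheel} rules out every $|S| \in \{0,1,2\}$ case where $w_S$ would equal $1$ or $J$, and when $|S|=3$ the row relation $\prod_{k\in V_i}x_k = J^{c_i}$ computes $w_{V_i}$ explicitly. For $j \in V_i$, only $S=\emptyset$ and $S=V_i$ survive in the $\eps=0$ part and only $S=\{j\}$ and $S=V_i\setminus\{j\}$ survive in the $\eps=1$ part; using $\sum_{k\in V_i}a_k = c_i$ to combine signs, the four surviving terms sum to $\tfrac{1}{16}(2 + 2(-1)^{a_j+b}) = \tfrac{1+(-1)^{a_j+b}}{8}$.

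The main obstacle is the $j \notin V_i$, $|S|=2$ case: I need $x_kx_l \neq x_j$ and $x_kx_l \neq Jx_j$ in $\Gamma$ for distinct $k,l \in V_i$ and $j \notin V_i$. Passing to $\Gamma_0 := \Gamma/\langle J\rangle$, both reduce to the single statement $x_kx_lx_j \neq 1$ in $\Gamma_0$, equivalently $e_k+e_l+e_j \notin \mathrm{rowspan}(M)$ over $\Z_2$, which in turn is equivalent to the existence of an element of $\ker M$ (a $\Z_2$-cycle in the wagon-wheel hypergraph, i.e., a character of $\Gamma_0$) with odd weight on the triple $\{k,l,j\}$. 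I would establish this by extending the case analysis in the proof of Lemma \ref{L:pzkwagonwheel}: for the $S'$-generators I use the single-generator characters of $G'_0$ that send one chosen generator to $-1$ and all others to $1$, and for the ancillary generators in $S''$ I use the explicit outer-ring/spoke cycles from that proof, which isolate individual ancillaries. A short case analysis over how $k,l,j$ distribute between $S'$ and $S''$ then produces, by $\Z_2$-linear combinations of these basic characters, a character in $\ker M$ with the desired odd parity on $\{k,l,j\}$. Once this is in hand, every term in the $j\notin V_i$ expansion vanishes except $S=\emptyset$, $\eps=0$, which contributes $\tfrac{1}{16}\cdot 2 = \tfrac{1}{8}$, completing the proof.
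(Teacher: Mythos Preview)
Your construction of the strategy and the overall approach match the paper's exactly: both use the left/right regular representations of $\Gamma$ acting on $\ket{\psi} = \tfrac{1}{\sqrt{2}}(\ket{1}-\ket{J})$, and both reduce the correlation computation to the identity $\bra{\psi} L_w R_v \ket{\psi} = [w=v] - [w=Jv]$.

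The difference is in how you handle the expansion of $P^i_a$, and here you are making your life harder than necessary. The paper does not expand over all subsets $S \subseteq V_i$. Instead, using $|V_i|=3$, the commutativity of the $A_{ik}$, the row relation $\prod_{k\in V_i} A_{ik} = (-\Id)^{c_i}$, and the constraint $\sum_{k\in V_i} a_k = c_i$, it first simplifies
\[
P^i_a \;=\; \prod_{k\in V_i}\frac{\Id + (-1)^{a_k}A_{ik}}{2}
\;=\; \frac{\Id}{4} + \frac{1}{4}\sum_{k\in V_i}(-1)^{a_k} A_{ik}.
\]
(The $|S|=3$ term collapses to $\Id$, and each $|S|=2$ term collapses to the complementary $|S|=1$ term.) After this simplification, the only inner products one ever needs are $\bra{\psi} B_j\ket{\psi}$ and $\bra{\psi} A_{ik}B_j\ket{\psi}$, i.e.\ one tests only whether $x_j$ or $x_k x_j$ equals $1$ or $J$ in $\Gamma$. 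That is precisely the content of Lemma~\ref{L:pzkwagonwheel}, with no extension required.

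In particular, your ``main obstacle'' dissolves under the same observation: for $k,l\in V_i$ with $\{k,l,m\}=V_i$, the row relation gives $x_k x_l = J^{c_i} x_m$ in $\Gamma$, so $x_k x_l x_j \in \{1,J\}$ iff $x_m x_j \in \{1,J\}$; since $j\notin V_i$ forces $m\neq j$, this is already covered by Lemma~\ref{L:pzkwagonwheel}. The additional character/cycle case analysis you propose is therefore unnecessary.
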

 \begin{proof}
 	Suppose $J \neq 1$ in $\Gamma_{Mx=c}$. We recall the construction of a perfect
 	commuting-operator strategy for $\mcG_{Mx=c}$ from \cite{CLS16}. Let $\mcH
 	= \ell^2 \Gamma_{Mx=c}$ be the regular representation of $\Gamma_{Mx=c}$, and
 	given $g \in \Gamma_{Mx=c}$, let $L(g)$ (resp. $R(g)$) denote left (resp. right)
 	multiplication by $g$. Then $L(g)$ and $R(g)$ are unitaries for all $g \in \Gamma_{Mx=c}$,
 	and we can get a perfect strategy for $\mcG_{Mx=c}$ by taking $A_{ij} = L(X_j)$
 	for all $1 \leq i \leq m$, $j \in V_i$, $B_j = R(X_j)$ for all $1 \leq j \leq n$, and
 	$\ket{\psi} = \frac{1-J}{\sqrt{2}}$ considered as an element of $\mcH$. 
 	Since $J$ is central of order $2$, we have that
 	\begin{equation*}
 	\bra{\psi} A_{ik} B_j \ket{\psi} = \bra{\psi} L(X_k)R(X_j) \ket{\psi}
 	= \bra{\psi} R(X_k X_j) \ket{\psi} = \begin{cases} 1 & X_k X_j = 1 \\
 	-1 & X_k X_j = J \\
 	0 & \text{otherwise}.
 	\end{cases}.
 	\end{equation*}
    Recall from Equation \eqref{E:observables4} that
 	\begin{equation*}
 	P_a^i = \prod_{k \in V_i} \left(\frac{\Id + (-1)^{a_k} A_{ik}}{2}\right)
 	\end{equation*}
    for all $a \in \mcO_A^i$ and $Q^j_b = \frac{1 + (-1)^b B_j}{2}$ for all
    $b \in \mcO_B^j$.
    Using the fact that $\prod_{k \in V_i} A_{ik} = (-\Id)^{c_i}$ in perfect
    strategies,  and that $|V_i|=3$ in the linear system constructed in Theorem
    \ref{thm::solutionembedding}, we get that
 	\begin{equation*}
 	P_a^i = \prod_{k \in V_i} \left(\frac{\Id + (-1)^{a_k} A_{ik}}{2}\right)
 	= \frac{\Id}{4} + \frac{1}{4} \sum_{k \in V_i} (-1)^{a_k} A_{ik}.
 	\end{equation*}
    By Lemma \ref{L:pzkwagonwheel}
 	\begin{equation*}
 	\bra{\psi} P^i_a Q^j_b \ket{\psi} = \frac{1}{8} + \frac{1}{8} \bra{\psi} \sum_{k \in V_i} (-1)^{a_k + b} A_{ik} B_j \ket{\psi}
 	= \begin{cases} \frac{1}{8} & j \not\in V_i \\  
 	\frac{1 + (-1)^{a_j + b}}{8} & j \in V_i.
 	\end{cases}.
 	\end{equation*}
 \end{proof}
 
 \section{Proof of Theorem \ref{T:main_game}}\label{S:mainproof}
 
 In this section we prove Theorem \ref{T:main_game}, by proving the main technical
 result of the paper. 
 \begin{thm}\label{T:gameapprox}
 	Let $L \subset A^*$ be a language over a finite alphabet $A$, and contained in
 	$\NTIME(T(n))$, where $T(n)^4$ is superadditive. Then for any string $w \in A^*$,
 	there is a non-local game $\mcG_w$ such that
 	\begin{enumerate}[(a)]
 		\item the game $\mcG_w$ has question sets of size $O(|w|)$ and output sets
 		of size at most $8$, 
 		\item the function $w \mapsto \mcG_w$ is computable in $O(|w|^k)$-time, where
 		$k$ is some universal constant,
 		\item if $w \not\in L$ then $\omega^{co}(\mcG_w) = 1$, and
 		\item if $w \in L$ then
 		\begin{equation*}
 		\omega^{co}_{\delta}(\mcG_w) \leq 1 -
 		\frac{1}{T(O(|w|))^{k'}} + O\left(\delta\right)
 		\end{equation*}
 		for some universal constant $k'$.
 	\end{enumerate}
 \end{thm}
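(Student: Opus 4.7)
The game $\mcG_w$ is built by composing the three reductions developed in the preceding sections. First, apply Theorem \ref{thm::SBR} to $L \in \NTIME(T(n))$ to obtain a fixed finitely-presented group $G = \langle S : R\rangle$ and an injection $\kappa : A^* \arr \mcF(S)$ with $w \in L \iff \kappa(w) = 1_G$, $|\kappa(w)| = O(|w|)$, and $\Dehn_G(n) = O(T(n)^4)$. Second, apply Definition \ref{def::HNNtrick} to the word $\kappa(w)$, producing a $\Z_2$-group $\tG$ with central involution $\tJ$ such that $\kappa(w) = 1_G \iff \tJ = 1$ in $\tG$; by Lemma \ref{lem::HNNtrick}(c), $\Ar_{\tG}(\tJ) \leq 4\Ar_G(\kappa(w)) + 1 = O(T(|w|)^4)$, and $\tG$ has a presentation of total size $N = O(|w|)$ since the only $w$-dependent relation is $[t,[x,\kappa(w)]] = \tJ$. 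Third, feed $\tG$ into Theorem \ref{thm::solutionembedding} to obtain a $3$-sparse $m \times n$ linear system $Mx=c$ with $m,n = O(|w|)$, constructible in $\poly(|w|)$ time, together with an embedding $\phi$ satisfying $\phi(\tJ) = J$ in $\Gamma := \Gamma_{Mx=c}$ and $\Ar_{\Gamma}(J) = O(N \cdot \Ar_{\tG}(\tJ)) = O(|w| \cdot T(|w|)^4)$. Set $\mcG_w := \mcG_{Mx=c}$.

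Parts (a)--(c) follow by inspection. The question sets have total size $O(|w|)$; since every row of $M$ has exactly three nonzero entries, $|\mcO_A^i| = 2^{3-1} = 4 \leq 8$ and $|\mcO_B^j| = 2$; the entire pipeline is polynomial time. For (c), if $w \not\in L$ then $\kappa(w) \neq 1_G$, hence $\tJ \neq 1$ in $\tG$, hence $J \neq 1$ in $\Gamma$, and Theorem \ref{thm::perfectstrats}(a) gives $\omega^{co}(\mcG_w) = 1$.

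For part (d), suppose $w \in L$ and let $\mcS$ be any $\delta$-AC operator strategy with $\omega(\mcG_w; \mcS) \geq 1 - \eps$. Corollary \ref{cor::approxstrategy}, applied with $\tR = 3$ and $\tM = O(|w|)$, produces a bipartite representation $(\Phi, \Phi')$ of $\Gamma$ on $\mcS$'s state $\ket{\psi}$ whose representation defect is $\tau = O\bigl(\sqrt{|w|(\eps+\delta)}\bigr) + O(\delta)$ and whose commutativity defect is $O(\delta)$, with $\Phi(J) = -\Id$. Since $w \in L$ we have $J = 1$ in $\Gamma$, so Lemma \ref{lem::bipartiterep} applied to the word $J$ yields
\begin{equation*}
    \Phi(J)\ket{\psi} \approx_{A \cdot (\tau + O(\delta))} \ket{\psi},
    \qquad A = O\bigl(\Ar_\Gamma(J)^2\bigr) = O\bigl(|w|^2 T(|w|)^8\bigr),
\end{equation*}
where we use that the longest relation in the solution-group presentation has length $\ell = O(1)$ and $|J| = 1$. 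Combining with $\Phi(J)\ket{\psi} = -\ket{\psi}$ forces $2 \leq A \cdot (\tau + O(\delta))$; substituting the expression for $\tau$, absorbing the additive $O(\delta)$ terms, and squaring isolates $\eps + \delta = \Omega\bigl(1/(|w| \cdot A^2)\bigr)$, so $\eps \geq 1/T(O(|w|))^{k'} - O(\delta)$ for some universal constant $k'$.

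The main obstacle is bookkeeping rather than any new idea: each of the three reduction stages compounds a polynomial blowup in area, and the $\Ar_\Gamma(J)^2$ factor in Lemma \ref{lem::bipartiterep} together with the final squaring step amplifies these blowups into the exponent $k'$. A secondary concern is that the $O(\delta)$ slack arising from non-commutation in Corollary \ref{cor::approxstrategy} must remain strictly subleading compared to $1/A$; this is precisely what the hypothesis $\delta = o(1/T(O(|w|))^k)$ built into Theorem \ref{T:main_game} guarantees.
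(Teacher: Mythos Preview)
Your proposal is correct and follows essentially the same three-stage pipeline as the paper (Theorem \ref{thm::SBR} $\to$ Definition \ref{def::HNNtrick} $\to$ Theorem \ref{thm::solutionembedding}, then Corollary \ref{cor::approxstrategy} with Lemma \ref{lem::bipartiterep} for the soundness bound). One small remark: your final comment that the $O(\delta)$ slack ``must remain strictly subleading'' is unnecessary here, since the statement of part (d) carries an explicit $+\,O(\delta)$ term and holds for all $\delta$; the smallness hypothesis on $\delta$ only enters later when deducing Theorem \ref{T:main_game}.
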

 While the constants $k,k'$ in Theorem \ref{T:gameapprox} are independent of $L$, 
 the other constants appearing in the big-$O$ can depend on
 $L$. The game $\mcG_w$ will be a linear system game $\mcG_{M(w)x=c(w)}$, where
 $M(w)x =c(w)$ is an $O(|w|) \times O(|w|)$-linear system. Since the linear
 system game of an $m \times n$ linear system $Mx=c$ can be constructed in
 $O(mn)$-time from $M$ and $b$, the goal in proving Theorem \ref{T:gameapprox}
 will be to show that the linear system $M(w)x=c(w)$ can be constructed in time
 polynomial in $|w|$. Theorem \ref{T:main_game} is an immediate corollary of
 Theorem \ref{T:gameapprox}.

 \begin{proof}[Proof of Theorem \ref{T:gameapprox}]
 	Given the language $L$, let $G = \langle S : R \rangle$ be the group from
 	Theorem \ref{thm::SBR}, and let $\kappa$ be the function $A^* \arr
 	\mcF(S)$. Given $w \in A^*$, we let $\widetilde{G}_{\kappa(w)}$ be the
 	group over $\Z_2$ constructed in Definition \ref{def::HNNtrick}, and
 	$M(w)x=c(w)$ be the linear system constructed from
 	$\widetilde{G}_{\kappa(w)}$ in Theorem \ref{thm::solutionembedding}.
 	Finally, we let $\mcG_w := \mcG_{M(w)x=c(w)}$ and $\Gamma_w := 
 	\Gamma_{M(w)x=c(w)}$. The only part of the presentation of
 	$\widetilde{G}_{\kappa(w)}$ that changes with $w$ is the relation
 	$[t,[x,\kappa(w)]]=1$, so the presentation of $\widetilde{G}_{\kappa(w)}$
 	has size $O(|\kappa(w)|) = O(|w|)$, and $M(w)x=c(w)$ is an $O(|w|) \times
 	O(|w|)$ linear system.  Because $M(w)$ has only three non-zero
 	entries per equation, Alice's output sets in $\mcG_w$ will have size $2^3 =
 	8$, while Bob's output sets will have size $2$. Thus parts (a) and (b) of
 	Theorem \ref{T:gameapprox} follow from part (c) of Theorem
 	\ref{thm::solutionembedding}. 
 	
 	By Theorem \ref{thm::SBR} and Lemma
 	\ref{lem::HNNtrick}, if $w \not\in L$ then $\kappa(w) \neq 1$ in $G$, and
 	hence $J \neq 1$ in $\widetilde{G}_{\kappa(w)}$. Since the inclusion
 	$\widetilde{G}_{\kappa(w)} \incl \Gamma_w$ sends
 	$J_{\widetilde{G}_{\kappa(w)}} \mapsto J_{\Gamma_w}$, we conclude that $J
 	\neq 1$ in $\Gamma_w$. By Theorem \ref{thm::perfectstrats},
 	$\omega^{co}(\mcG_w)=1$, proving part (c). 
 	
 	This leaves part (d). Suppose $w \in L$. Then $\kappa(w)=1$ in $G$, $J= 1$
 	in $\widetilde{G}_{\kappa(w)}$, and hence $J=1$ in $\Gamma_w$. Suppose
 	$\mcS$ is a $\delta$-AC operator strategy for $\mcG_w$ with 
 	$\omega(\mcG_w; \mcS) \geq 1-\eps$. Since $M$ has only three non-zero entries
 	per row, the parameters $r$ and $K$ appearing in Corollary \ref{cor::approxstrategy}
 	are $O(1)$ and $O(|w|)$ respectively. Also, because we are interested in
 	$\delta \leq 2$, we can say that $\delta = O(\sqrt{\delta})$. Thus Corollary
 	\ref{cor::approxstrategy} states that there is a $(O(\sqrt{|w|(\eps+\delta)}),
 	O(\delta))$-bipartite representation $(\Phi,\Phi')$ of $\mcG_w$ with
 	respect to the state $\ket{\psi}$ used in $\mcS$. By construction, this
 	bipartite representation has $\Phi(J) = -\Id$. The length of the longest
 	relation in $\Gamma_w$ is $4$, and the length of $J$ in $\Gamma_w$ is $1$,
 	so Lemma \ref{lem::bipartiterep} implies that
 	\begin{equation}\label{E:contradiction}
 	-\ket{\psi} = \Phi(J) \ket{\psi} \approx_{O\left(\Ar_{\Gamma_w}(J)^2\sqrt{|w|(\eps+\delta)}\right)}
 	\ket{\psi}.
 	\end{equation}
 	By Theorem \ref{thm::solutionembedding}, part (b) and Lemma
 	\ref{lem::HNNtrick}, part (c), 
 	\begin{equation*}
 	\Ar_{\Gamma_w}(J) = O\left(|w| \cdot \Ar_{\widetilde{G}_{\kappa(w)}}(J)\right)
 	= O\left(|w| \cdot \Ar_G(\kappa(w))\right).
 	\end{equation*}
 	Finally, by Theorem \ref{thm::SBR}, $|\kappa(u)| = O(|u|)$ and $\Dehn_G$ is
 	bounded by a function equivalent to $T(n)^4$, so there is a constant $C$ such
 	that $\Ar_G(\kappa(w)) = O(T(C|w|)^4 + |w|)$. Since $T(n)^4$ is
 	superadditive by assumption, $|w| = O(T(|w|)^4)$, and we can conclude that
 	$\Ar_{\Gamma_w}(J) = O(T(C|w|)^{8})$. Returning to Equation \eqref{E:contradiction},
 	since $\norm{-\ket{\psi} - \ket{\psi}} = 2$, we see that there is a
 	constant $C_0>0$ such that
 	\begin{equation*}
 	C_0 T(C|w|)^{18}\sqrt{\eps + \delta} \geq 2.
 	\end{equation*}
 	Hence 
 	\begin{equation*}
 	C_0^2 T(C|w|)^{36}(\eps + \delta) \geq 4.
 	\end{equation*}
 	so,
 	\begin{equation*}
 	\eps \geq \frac{4}{C_0^2 T(C|w|)^{36}} - \delta,
 	\end{equation*}
 	So we conclude that
 	\begin{equation*}
 	\omega(\mcG_w;\mcS) \leq 1 - \Omega\left(\frac{1}{T(C|w|)^{36}}\right)
 	+ O(\delta).
 	\end{equation*}
 	Because $T(n)^4$ is superadditive, $T(C_0 \cdot C|w|)^{36} \geq C_0 T(C|w|)^{36}$
 	for any integer $C_0$, so we can move the constant from the big-$\Omega$ inside
 	$T$, proving part (d). 
 \end{proof}

 \section{Multi-prover interactive proofs}
 
 In this section we define the complexity class
 $\PZKMIP^{co}_{\delta}(2,1,1,1-1/f(n))$, and prove Theorem \ref{T:main_MIP}.
 We first recall the definition of $\MIP^{co}_{\delta}$. The definition given
 here is a simple variant on Definition 8 of \cite{CV15}.
 \begin{defn}\label{defn::mip}
 	A language $L$ over an alphabet $A$ is in the class
 	$\MIP^{co}_{\delta}(2,1,1,1-1/f(n))$ of multi-prover interactive proofs
 	with two provers, one round, completeness probability $1$, and soundness probability
 	$1 - 1/f(n)$, if and only if there is family of two-player non-local games
 	$\mcG_w = (\mcI_A^w,\mcI_B^w,\mcO_A^{*,w}, \mcO_B^{*,w}, V_w, \pi_w)$ indexed by strings $w
 	\in A^*$, such that 
 	\begin{itemize}
 		\item the input sets $\mcI_A^w$, $\mcI_B^w$ and output sets $\mcO^{*,w}_A$,
 		$\mcO^{*,w}_B$ for $\mcG_w$ are subsets of strings of length $\poly |w|$
 		(and hence can have size at most $2^{\poly |w|}$). 
 		\item the function $V_w$ can be computed in polynomial time in $|w|$
 		and the lengths of its inputs,
 		\item the distribution $\pi_w$ can be sampled in polynomial time in
 		$|w|$ and the lengths of its inputs, 
 		\item (completeness) if $w \in L$ then $\omega^{co}(\mcG_w) = 1$, and
 		\item (soundness) if $w \not\in L$ then $\omega^{co}_{\delta}(\mcG_w) \leq 1 - 1/f(|w|)$.
 	\end{itemize}
 	The family $\{\mcG_w\}$ is referred to as a \emph{protocol} for $L$. 
 \end{defn} 
 Here $\delta$ can also be a function of $|w|$. When $\delta=0$,
 $\MIP^{co}_0$ is the class of commuting-operator multi-prover interactive
 proofs, which dates back to \cite{IKPSY08}. Note that, in Definition \ref{defn::mip}, the
 protocol must be sound against $\delta$-AC operator strategies, whereas the completeness condition requires a perfect commuting-operator strategy. As a result,
 $\MIP^{co}_{\delta} \subset \MIP^{co}$ for all $\delta$. 
 
 \begin{rmk}\label{rmk::povm}
 	Our definition is slightly different from \cite{CV15} in that we
 	use $\delta$-AC strategies with projective measurements, rather than POVMs.
 	It's not clear how this changes the complexity class in general, since we
 	are restricting the class of strategies that a protocol must be sound against
 	(which potentially strengthens the class) and restricting the class of strategies
 	that can be used for completeness (which potentially weakens the class). 
 	However, Claim 9 of \cite{CV15} shows that projective measurements and POVMs
 	are equivalent up to an increase in $\delta$ proportional to the size of the
 	output sets. Since our lower bounds use protocols with a constant number of
 	outputs, the lower bounds will also apply if we define $\MIP^{co}_{\delta}$
 	using POVMs. 
 \end{rmk}
 
 
 Next we will define the perfect zero knowledge version of $\MIP^{co}_{\delta}$,
 called $\PZKMIP^{co}_{\delta}$.   Informally, a multi-prover interactive proof
 is perfect zero-knowledge if the verifier gains no new information from
 interacting with the provers. This is formalized by requiring that, for every
 yes instance, the provers have a strategy for which the verifier can
 efficiently simulate the provers' behaviour. 
 
 Let $\mcG = (\mcI_A,\mcI_B,\mcO_A^*, \mcO_B^*, V, \pi)$ be a non-local game.
 If the players use a commuting-operator strategy given by measurements
 $\{P^x_a\}$ and $\{Q^y_b\}$ and a state $\ket{\psi}$ in a Hilbert space $\mcH$,
 then to an outside party (such as the verifier), the players actions are
 described by the probabilities 
 \begin{equation*}
 p(a,b|x,y) = \bra{\psi} P^x_a Q^y_b \ket{\psi}.
 \end{equation*}
 When $x,y$ are fixed, $p(a,b|x,y)$ gives a probability distribution over
 outcomes $(a,b) \in \mcO_A^x \times \mcO_B^y$.  The family of probability
 distributions $\underline{p} = \{p(a,b|x,y) : (x,y) \in \mcI_A \times \mcI_B,
 (a,b) \in \mcO_A^x \times \mcO_B^y\}$ is called the \emph{correlation matrix}
 of the strategy.
 
 In a interactive proof system, a record of interactions between verifier and
 provers is called a transcript. Let $\{\mcG_w\}$ be a
 $\MIP^{co}_{\delta}(2,1,1,s)$ protocol for a language $L$ as in Definition
 \ref{defn::mip}. During the game $\mcG_w$, the transcript consists simply of
 the inputs $(x,y) \in \mcI^w_A \times \mcI^w_B$ sent to the provers, and the
 outputs $(a,b) \in \mcO_A^x \times \mcO_B^y$ received back. If the verifier
 asks questions $x,y$ with probability $\pi(x,y)$, then the distribution over
 transcripts $(x,y,a,b)$ is given by $\pi(x,y) p(a,b|x,y)$, where
 $\{p(a,b|x,y)\}$ is the correlation matrix of the provers' strategy. A strategy
 is said to be \emph{perfect zero-knowledge against an honest verifier} if it is
 possible to sample from the distribution $\{\pi_w(x,y)
 p(a,b|x,y)\}_{(x,y,a,b)}$ in polynomial time. However, this assumes that the
 verifier chooses questions $x,y$ according to the probability distribution
 $\pi_w$ given in the protocol, something that the provers cannot validate
 themselves while the game is in progress. To be perfect zero-knowledge against
 a possibly dishonest verifier, it is necessary that the verifier be able to
 simulate $\pi(x,y) p(a,b|x,y)$ for any (simulable) distribution $\pi(x,y)$ on
 inputs. This is equivalent to being able to simulate the distributions
 $\{p(a,b|x,y)\}$, so we make the following definition:
 \begin{defn}\label{defn::pzkmip}
 	Let $\{\mcG_w\}$ be a $\MIP^{co}_{\delta}(2,1,1,1-s)$-protocol for a
 	language $L$. Then $\{\mcG_w\}$ is said to be \emph{perfect zero-knowledge}
 	if for each string $w$ and pair $(x,y) \in \mcI_A \times \mcI_B$, there is
 	a probability distribution $\{p_w(a,b|x,y) : (a,b) \in \mcO_A^x \times
 	\mcO_B^y \}$ over $\mcO_A^x \times
 	\mcO_B^y$ such that
 	\begin{enumerate}
 		\item the distribution $\{p_w(a,b|x,y)\}$ can be sampled in polynomial
 		time in $|w|$, $|x|$, and $|y|$, and
 		\item if $w \in L$, then $\{p_w(a,b|x,y) : (x,y) \in \mcI_A \times \mcI_B,
 		(a,b) \in \mcO_A^x \times \mcO_B^y \}$ is the correlation matrix of
 		a commuting-operator strategy $\mcS$ with winning probability
 		$\omega(\mcG_w; \mcS) = 1$. 
 	\end{enumerate}
 	The class $\PZKMIP^{co}_{\delta}(2,1,1,1-1/f(n))$ is the class of languages
 	in $\MIP^{co}_{\delta}(2,1,1,1-1/f(n))$ with a perfect zero-knowledge
 	protocol.
 \end{defn} 
 
 \begin{proof}[Proof of Theorem \ref{T:main_MIP}]
 	Theorem \ref{T:gameapprox} immediately implies that any language $L \in
 	\coNTIME(f(n))$ has a protocol in $\MIP^{co}_{\delta}(2,1,1,1-1/f(Cn)^k)$
 	for some constants $C$ and $k$, where $\delta = o(1/f(Cn)^{k})$. Since the games constructed in the proof
 	of Theorem \ref{T:gameapprox} come from the wagon wheel construction,
 	Proposition \ref{P:pzkwagonwheel} implies that when $w \in L$, the game
 	$\mcG_w$ has a perfect commuting operator strategy with a correlation
 	that can easily be simulated by the verifier. 
 \end{proof}
 
 \subsection{Upper bounds}
 
As mentioned in the introduction, no upper bound on $\MIP^{co}$ is known, but
an upper bound on $\MIP^{co}_{\delta}$ follows from \cite{CV15} as we will now
describe. Recall that $\omega^*_{\delta}(G)$ is the supremum of winning probabilities
across finite-dimensional $\delta$-AC strategies. 
\begin{theorem}[\cite{CV15}, Theorem 2 and Claim 9]\label{thm:sdp}
    Let $G$ be a $2$-prover non-local game in which each prover has $\ell$
    possible answers, and $\omega_{QCSDP}^N(G)$ be the optimum of the $N$-th
    level of the QC SDP hierarchy for $G$. Then $\omega^*_{\delta}(G)
    \geq \omega_{QCSDP}^N(G)$ for all $\delta \geq \Omega(\ell^4/\sqrt{N})$.
\end{theorem}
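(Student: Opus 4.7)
The plan is to construct, from an optimal SDP solution at level $N$, a finite-dimensional $\delta$-AC strategy whose winning probability matches $\omega_{QCSDP}^N(G)$. First I would apply a GNS-type construction to the level-$N$ moment matrix $\Gamma$: viewing $\Gamma$ as a positive semidefinite form on the finite-dimensional space spanned by formal words of length at most $N/2$ in the generators $P^x_a$ and $Q^y_b$, one quotients by the kernel to obtain a finite-dimensional Hilbert space $\mcH$, a unit vector $\ket{\psi}$ corresponding to the empty word, and operators $\widehat{P}^x_a, \widehat{Q}^y_b$ acting on $\mcH$ by left multiplication, such that $\bra{\psi} W \ket{\psi}$ reproduces the SDP moment of any word $W$ of length up to $N$. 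The projectivity, POVM completeness, and Alice/Bob commutation relations imposed as SDP constraints hold exactly on short test words; the issue is to upgrade each of them to hold approximately in operator norm on all of $\mcH$.

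Next I would convert ``vanishes on short words'' into a genuine operator-norm commutator bound. The device is a Cauchy--Schwarz chain: given an Alice operator $A$ and a Bob operator $B$ for which $[A,B]=0$ is imposed at the SDP level, one estimates $\norm{[A,B]\xi}$ on test vectors $\xi = V\ket{\psi}$ with $|V|$ below some cutoff using SDP moments, and then expands an arbitrary vector of $\mcH$ in the word basis to extend the bound to the entire space. Truncating at level $N$ costs a factor of $1/\sqrt{N}$, which comes essentially from an $\ell^2$-to-operator-norm conversion for the tail of the word expansion, together with the fact that moments of words of length $>N$ are not controlled by the SDP.

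The factor of $\ell^4$ enters in the rounding step, in which the SDP-produced operators --- which are self-adjoint and sum to the identity but whose constituent summands need not be honest projectors on the truncated $\mcH$ --- must be replaced by projective measurements preserving both completeness and approximate commutation. A Naimark-style dilation or a Jordan-lemma rounding, as alluded to by Claim 9 of \cite{CV15}, contributes a factor of $\ell$ for each of the projectivity defect, the completeness defect, the commutator defect, and the change in winning probability, yielding the $\ell^4/\sqrt{N}$ scaling. I expect the main obstacle to be orchestrating these error terms simultaneously: each of projectivity, completeness, approximate commutativity, and the recovered winning probability must be controlled to within $O(\ell^4/\sqrt{N})$ \emph{at once}, and it is the interdependence of these bounds --- rather than any single estimate in isolation --- that forces the careful bookkeeping and determines the particular power of $\ell$ that appears.
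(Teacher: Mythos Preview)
The paper does not actually prove this theorem: it is quoted from \cite{CV15}, and the only discussion the paper provides is the short paragraph immediately following the statement. That paragraph explains that \cite[Theorem~2]{CV15} already gives $\omega^*_{\delta}(G)\geq\omega^N_{QCSDP}(G)$ for $\delta\geq\Omega(\ell^2/\sqrt{N})$ when $\delta$-AC strategies are allowed to use POVMs, and that \cite[Claim~9]{CV15} rounds such a POVM strategy to a projective one at the cost of an additional factor of $\ell^2$ in $\delta$. The combination gives $\ell^4/\sqrt{N}$. That is the entirety of the paper's argument.

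Your sketch goes well beyond this and attempts to outline the actual content of \cite{CV15}. The broad strokes --- GNS construction on the level-$N$ moment matrix, converting state-dependent SDP identities into operator-norm bounds via a truncation that costs $1/\sqrt{N}$, and then rounding to honest measurements --- are indeed the ingredients of the Coudron--Vidick proof. However, your accounting of the $\ell^4$ factor does not match the paper's. You attribute it to four separate $\ell$ factors arising simultaneously in the rounding step; the paper instead decomposes it as $\ell^2\cdot\ell^2$, with the first $\ell^2$ already present in the SDP-to-POVM-strategy step of \cite[Theorem~2]{CV15} and the second $\ell^2$ coming only afterwards from the POVM-to-projective conversion of \cite[Claim~9]{CV15}. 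These are two sequential black boxes, not a single joint rounding, so the ``orchestrating these error terms simultaneously'' difficulty you anticipate is not actually the structure of the argument.
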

The original statement of \cite[Theorem 2]{CV15} assumes POVM strategies rather
than projective strategies, for which it is possible to take
$\delta\geq\Omega(\ell^2/\sqrt{N})$. By \cite[Claim 9]{CV15}, it is possible to
round POVM strategies achieving $\omega_{QCSDP}^N(G)$ to projective strategies
by adding a factor of $\ell^2$, leading to $\delta\geq\Omega(\ell^4/\sqrt{N})$.
The QC SDP hierarchy for a non-local game $G$ is as in Definition 10 of
\cite{CV15}.  For our purposes the only properties of the QC SDP hierarchy that
we will require are the following: 
 
\begin{fact} \label{fct:upperbound}
    $\omega_{QCSDP}^N(G)$ is a non-increasing function of $N$, and
    $\omega_{QCSDP}^N(G) \geq \omega^{co}(G)$ for all $N$.  This is an
    elementary property of the hierarchy and is discussed in \cite{CV15}.
\end{fact}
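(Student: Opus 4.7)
The plan is to verify both claims directly from the definition of the QC SDP hierarchy in \cite[Definition~10]{CV15}. Recall that the level-$N$ SDP relaxation maximizes a linear functional of the form $\sum_{x,y,a,b}\pi(x,y)V(a,b|x,y)\, M[\mathbf{1}, P^x_a Q^y_b]$ over positive semidefinite ``pseudo-moment'' matrices $M$ indexed by monomials of degree at most $N$ in formal variables representing Alice's measurement operators $\{P^x_a\}$ and Bob's $\{Q^y_b\}$, subject to linear constraints enforcing (i) Hermiticity of the underlying $*$-algebra, (ii) the projective-measurement relations $(P^x_a)^2=P^x_a$, $\sum_a P^x_a = \Id$ (and analogously for $Q^y_b$), and (iii) commutation $[P^x_a,Q^y_b]=0$.

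For the inequality $\omega_{QCSDP}^N(G)\ge \omega^{co}(G)$, I would take any commuting-operator strategy $\mcS=(\mcH,\{P^x_a\},\{Q^y_b\},\ket{\psi})$ and define the Gram-style matrix
\begin{equation*}
M_\mcS[\alpha,\beta] := \bra{\psi}\alpha^*\beta\ket{\psi}
\end{equation*}
for each pair of monomials $\alpha,\beta$ of degree at most $N$ in the actual operators $P^x_a$, $Q^y_b$ acting on $\mcH$. This is the Gram matrix of the family of vectors $\{\alpha\ket{\psi}\}$, hence positive semidefinite. Constraints (i)--(iii) hold on the nose because the strategy uses projective measurements that commute across the two parties. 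The objective value equals $\omega(G;\mcS)$, so $M_\mcS$ is feasible at every level $N$ with value $\omega(G;\mcS)$; taking a supremum over commuting-operator strategies gives the claim.

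For monotonicity, I would note that any feasible solution $M$ at level $N+1$ restricts to its principal submatrix $M'$ indexed by monomials of degree at most $N$. Principal submatrices of PSD matrices are PSD, all linear constraints appearing at level $N$ are a subset of those at level $N+1$ and thus inherited by $M'$, and the objective functional depends only on entries $M[\mathbf{1}, P^x_a Q^y_b]$ which are preserved by the restriction (since $P^x_a Q^y_b$ has degree $2\le N$ for any nontrivial hierarchy level). Consequently $\omega_{QCSDP}^{N+1}(G)\le \omega_{QCSDP}^N(G)$.

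The only real obstacle is bookkeeping: one must match exactly the formulation of \cite[Definition~10]{CV15}, in particular whether indexing is by all low-degree monomials or by reduced ``words,'' and whether the normalization $\sum_a P^x_a = \Id$ is imposed as a single relation on $M$ or propagated via $\sum_a M[\alpha,P^x_a\beta]=M[\alpha,\beta]$ for all $\alpha,\beta$ in the index set. Once the formulation is pinned down, both properties reduce to the standard ``PSD principal submatrices are PSD'' argument and the observation that an actual representation trivially satisfies all algebraic relations enforced by the hierarchy, so no further work is required.
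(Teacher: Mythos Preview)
Your argument is correct and is the standard justification for these two properties of NPA-type hierarchies. Note, however, that the paper does not actually prove this statement: it records it as a \emph{Fact} and simply cites \cite{CV15}, saying ``This is an elementary property of the hierarchy and is discussed in \cite{CV15}.'' So there is no proof in the paper to compare against; your write-up supplies exactly the kind of argument the paper is gesturing at, with the feasibility-from-a-genuine-strategy direction and the principal-submatrix restriction for monotonicity both being the expected moves.
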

 
\begin{fact}\label{fct:sdpcomp}
    The quantity $\omega_{QCSDP}^N(G)$ can be computed to additive precision $\eps$ 
    in time polynomial in
    $(Q\ell)^{N}$ and $\log(1/\eps)$, where $Q$ is the maximum number of questions to either prover in
    $G$, and $\ell$ is the maximum number of answers.  This is because
    $\omega_{QCSDP}^N(G)$ is defined (in Definition 10 of \cite{CV15}) to be the
    optimal value of an semi-definite program on $\poly((Q\ell)^{N})$ dimensional
    space, with $\poly((Q\ell)^{N})$ constraints.
\end{fact}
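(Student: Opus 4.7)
The plan is to unpack the explanation already sketched after Fact \ref{fct:sdpcomp}. By Definition 10 of \cite{CV15}, $\omega_{QCSDP}^N(G)$ is the optimum of a semidefinite program whose decision variable is a positive semidefinite moment matrix indexed by degree-$N$ monomials in the measurement operators. There are $O((Q\ell)^N)$ such monomials up to the moment-matrix symmetries, so the matrix has side length $\poly((Q\ell)^N)$, and the linear constraints imposing the defining relations of the measurements (projectivity, summing to $\Id$, the commutation-type constraints of the hierarchy) and the normalization of the state are of the same count. The objective is the linear functional on the moment matrix given by averaging $V(a,b \mid x,y)$ against $\pi(x,y)$. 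Thus the entire program can be written out in time polynomial in $(Q\ell)^N$ given $G$ and $N$.

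Next, I would invoke a standard polynomial-time algorithm for semidefinite programming to solve this program to additive precision $\eps$. Concretely, given an SDP on $\R^{d\times d}$ with $m$ linear constraints, whose feasible set is contained in a ball of radius $R$ and contains a ball of radius $r$ around a strictly feasible point, and whose data and objective have bit-length $\poly(d,m)$, both the ellipsoid method and primal-dual interior-point methods compute the optimum to additive precision $\eps$ in time $\poly(d,m,\log(R/r),\log(1/\eps))$. Here $d,m = \poly((Q\ell)^N)$, the feasible set is bounded (the moment matrix trace is normalized by the unit-state constraint), and one can exhibit a strictly feasible solution coming from a trivial strategy on a one-dimensional Hilbert space, giving $\log(R/r) = \poly((Q\ell)^N)$.

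The only routine technical point — and the main thing to be careful about — is verifying the geometric regularity assumptions (boundedness, strict feasibility with a not-too-small inner ball, and polynomial bit-length of the SDP data) that are needed to apply the ellipsoid or interior-point theorems as stated. These are standard for the QC SDP hierarchy because all moment matrix entries are bounded in $[-1,1]$ and the trivial one-dimensional strategy produces a moment matrix well inside the PSD cone; this yields bounds of the form $R \leq \poly((Q\ell)^N)$ and $r \geq 1/\poly((Q\ell)^N)$. Plugging these in, the total running time is $\poly((Q\ell)^N, \log(1/\eps))$, which is exactly the claim of Fact \ref{fct:sdpcomp}. No additional combinatorial argument is needed beyond this unpacking.
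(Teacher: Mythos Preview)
Your proposal is correct and follows exactly the paper's approach: the paper gives no separate proof of this fact, treating the one-sentence justification already in the statement (``$\omega_{QCSDP}^N(G)$ is the optimum of an SDP of dimension and constraint count $\poly((Q\ell)^N)$'') as sufficient, and you have simply fleshed that out with standard SDP complexity bounds. One minor quibble: a trivial strategy on a one-dimensional Hilbert space yields a rank-one (hence only semidefinite) moment matrix, so for the strict-feasibility inner ball you would instead take, e.g., a maximally mixed strategy on a larger space or a small perturbation---but this is routine and does not affect the argument.
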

 
Suppose that one wishes to decide whether a non-local game $G$ has
$\omega^{co}(G) = 1$, or has  $\omega^{co}_{\delta}(G) \leq 1 - 1/f$ promised
that one of the two is the case.  By Theorem 2 of \cite{CV15}, there exists $M
= O(\ell^8/\delta^2)$ such that $\omega^{co}_{\delta}(G) \geq
\omega_{QCSDP}^{M}(G)$.  If $\omega^{co}(G) = 1$ then $\omega_{QCSDP}^M(G) \geq
\omega^{co}(G) = 1$ by Fact \ref{fct:upperbound}. On the other hand, if
$\omega^{co}_{\delta}(G) \leq 1 - 1/f$ then $\omega^{M}_{QCSDP}(G) \leq 1-1/f$.
It follows by Fact \ref{fct:sdpcomp} that this decision problem can be solved
in time that is polynomial in $(Q\ell)^{M} = (Q\ell)^{O(\ell^8/\delta^2)}$,
where $Q$ and $\ell$ are the sizes of the question and answer sets in $G$
respectively (this argument is made in \cite{CV15} for $\omega^*$ and
$\omega^*_{\delta}$, but it works equally well for $\omega^{co}$ and
$\omega^{co}_{\delta}$). To summarize, we have the following theorem:
\begin{thm}[\cite{CV15}]\label{T:upperbound}
    Let $\delta = o(1/f(n))$, and suppose $L \in \MIP^{co}_{\delta}(2,1,1,1-1/f(n))$. 
    If $f(n)$ is at least exponential, or $L$ has a protocol with constant size
    output sets and $f(n)$ is at least polynomial, then $L \in 
    \TIME(\exp(1 / \poly(\delta)))$. 
\end{thm}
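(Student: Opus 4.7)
The plan is to construct a deterministic decision procedure for $L$ directly from the QC SDP hierarchy of \cite{CV15}, using the three ingredients already stated in Theorem \ref{thm:sdp} and Facts \ref{fct:upperbound}--\ref{fct:sdpcomp}. Let $\{\mcG_w\}$ be a $\MIP^{co}_{\delta}(2,1,1,1-1/f(n))$ protocol for $L$, and let $Q = Q(|w|)$ and $\ell = \ell(|w|)$ denote the maximum number of questions and answers per prover in $\mcG_w$. On input $w$, the algorithm picks a level $N := \lceil C \ell^8 / \delta^2 \rceil$ of the hierarchy with the constant $C$ chosen large enough that Theorem \ref{thm:sdp} applies (so that $\omega^{co}_{\delta}(\mcG_w) \geq \omega_{QCSDP}^N(\mcG_w)$, using the remark that the argument of \cite{CV15} carries over verbatim from $\omega^*_\delta$ to $\omega^{co}_\delta$), then computes $\omega_{QCSDP}^N(\mcG_w)$ to additive precision $1/(3f(n))$ via Fact \ref{fct:sdpcomp}, and accepts iff the computed value exceeds $1-1/(2f(n))$.

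Correctness follows immediately: if $w \in L$ then $\omega^{co}(\mcG_w) = 1$ forces $\omega_{QCSDP}^N(\mcG_w) \geq 1$ by Fact \ref{fct:upperbound}, while if $w \notin L$ then the choice of $N$ together with Theorem \ref{thm:sdp} gives $\omega_{QCSDP}^N(\mcG_w) \leq \omega^{co}_{\delta}(\mcG_w) \leq 1 - 1/f(n)$; the two cases differ by at least $1/f(n)$, so the approximation is good enough to distinguish them. The entire proof is really an exercise in bookkeeping, and the only substantive work is to verify that the running time $\poly\bigl((Q\ell)^N, \log f(n)\bigr)$ coming from Fact \ref{fct:sdpcomp} fits into $\TIME(\exp(1/\poly(\delta)))$ in the two regimes named by the theorem.

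The main obstacle, such as it is, lies in this runtime bookkeeping, and the key observation is that the hypothesis $\delta = o(1/f(n))$ forces $1/\delta$ to dominate the other parameters in each case. In the constant output-size regime we have $\ell = O(1)$, hence $N = O(1/\delta^2)$, and $Q \leq 2^{\poly(|w|)}$; the assumption $f(n) \geq \poly(n)$ gives $1/\delta = \omega(\poly(n))$, so $\log Q = \poly(|w|) \leq \poly(1/\delta)$, and therefore $(Q\ell)^N \leq \exp(\poly(|w|) \cdot 1/\delta^2) = \exp(\poly(1/\delta))$. In the unrestricted output-size regime, $Q$ and $\ell$ are both at most $2^{\poly(|w|)}$ and $N \leq 2^{\poly(|w|)}/\delta^2$; but now $f(n) \geq \exp(n)$ forces $1/\delta \geq 2^{\poly(|w|)}$, so $Q, \ell \leq 1/\delta$ and $N \leq \poly(1/\delta)$, yielding $(Q\ell)^N \leq (1/\delta)^{\poly(1/\delta)} = \exp(\poly(1/\delta))$. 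Combining the precision factor $\log(3f(n)) \leq \log(3/\delta) \leq \poly(1/\delta)$ from Fact \ref{fct:sdpcomp} in both cases completes the runtime bound, and the theorem follows.
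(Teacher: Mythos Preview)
Your proof is correct and follows essentially the same approach as the paper's: both reduce membership in $L$ to evaluating the QC SDP hierarchy at level $N = O(\ell^8/\delta^2)$, invoke Theorem~\ref{thm:sdp} and Facts~\ref{fct:upperbound}--\ref{fct:sdpcomp} for correctness, and then do the same two-case runtime bookkeeping (constant $\ell$ with $f$ at least polynomial, versus $\ell = 2^{\poly(n)}$ with $f$ at least exponential). Your write-up is slightly more explicit than the paper's in spelling out the accept/reject threshold and the SDP precision $1/(3f(n))$, but the substance is identical.
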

\begin{proof}
    A protocol for $L$ consists of a family of games $\mcG_w$ with question and
    answer sets of size $2^{\poly(n)}$, where $n = |w|$. Suppose that there is
    a constant $C$ such that $f(n) \geq C^n$ for all $n \geq 1$, so that
    $\exp(\poly(n)) \leq \poly(f(n)) \leq 1/\poly(\delta)$. Setting $Q = \ell =
    \exp(\poly(n))$ in the discussion above, we see that there is an algorithm to
    decide whether $\omega^{co}(\mcG_w) =1$ or $\omega^{co}_{\delta}(\mcG_w)
    \leq 1 - 1/f(n)$ (and hence whether $w \in L$) with running time at most
    \begin{equation*}
        \exp(\poly(n) \cdot \exp(\poly(n)) / \delta^2) \leq \exp(1/ \poly(\delta)). 
    \end{equation*}
    
    Similarly, if $L$ has a protocol with constant size output sets, then setting
    $Q = \exp(\poly(n))$ and $\ell = O(1)$, we see that there is an algorithm
    to decide whether $w \in L$ with running time at most
    $\exp(\poly(n) / \delta^2) \leq \exp(1/\poly(\delta))$. 
\end{proof}

\bibliographystyle{amsalpha}
\bibliography{approx}

\end{document}